\documentclass[a4paper,twocolumn,11pt]{quantumarticle}
\pdfoutput=1
\usepackage[utf8]{inputenc}
\usepackage[english]{babel}
\usepackage[T1]{fontenc}
\usepackage{amsmath}
\usepackage{amssymb}
\usepackage{amsthm}
\usepackage{amsfonts}
\usepackage{cancel}
\usepackage{caption}
\usepackage{longtable}
\usepackage{subcaption}
\usepackage{enumitem}
\usepackage{multirow}
\usepackage{tcolorbox}
\usepackage{comment}
\usepackage{braket}
\usepackage{graphicx}
\usepackage{algpseudocode}
\usepackage{booktabs}
\usepackage{tikz}
\usetikzlibrary{quantikz2}
\usepackage{hyperref}
\usepackage{lineno}
\usepackage{orcidlink}
\usepackage[numbers]{natbib}

\newtheorem{definition}{Definition}
\newtheorem{theorem}{Theorem}

\newtheorem{corollary}{Corollary}
\newtheorem{lemma}[theorem]{Lemma}
\newtheorem{example}{Example}

\begin{document}

\title{Extending the Bloch sphere model to an N-qubit system}

\author{Francisco Piñero}
\orcid{0009-0003-6978-910X}
\email{Francisco.Pinero@uclm.es}
\thanks{Corresponding author}
\affiliation{Universidad de Castilla-La Mancha, Albacete, Spain}

\author{Cristian Franco}
\orcid{0009-0001-4842-7834}
\email{cristian.franco@unir.net}
\affiliation{Universidad de La Rioja, Logroño, Spain}

\author{Hernán I. de la Cruz}
\orcid{0000-0001-6445-5256}
\email{hernanindibil.lacruz@alu.uclm.es}
\affiliation{Universidad Complutense de Madrid, Madrid, Spain}

\author{Fernando L. Pelayo}
\orcid{0000-0001-7849-087X}
\email{fernandol.pelayo@uclm.es}
\affiliation{Universidad de Castilla-La Mancha, Albacete, Spain}

\author{Vicente Pascual}
\email{vpfuniversity@gmail.com}
\affiliation{Universidad de Castilla-La Mancha, Albacete, Spain}

\author{Mauro Mezzini}
\orcid{0000-0002-5308-0097}
\email{mauro.mezzini@uniroma3.it}
\affiliation{Roma Tre University, Roma, Italy}

\author{Jose Javier Paulet}
\orcid{0000-0002-0777-4119}
\email{paulet@qsimov.com}
\affiliation{QSimov, Talavera de la Reina, Spain}

\author{Fernando Cuartero}
\orcid{0000-0001-6285-8860}
\email{fernando.cuartero@uclm.es}
\affiliation{Universidad de Castilla-La Mancha, Albacete, Spain}

\maketitle

\begin{abstract}
    \noindent
The Bloch sphere is an elegant tool for representing single-qubit states. However, a widely accepted generalization for multi-qubit systems with entanglement remains absent. We propose a novel geometric model extending the Bloch sphere representation to arbitrary $N$-qubit systems using $2^N-1$ spheres. We demonstrate that any pure 2-qubit state is uniquely described by three spheres: two for individual qubits and a third encapsulating bipartite entanglement. Generalizing this, we establish an $N$-qubit parameterization through the hierarchical application of controlled rotation gates along the $Z$ and $Y$ axes. We formally prove a strict bijection between the standard state vector representation and our model's angular parameters. This framework provides an intuitive visualization of multiple entanglement, offering potential computational advantages for quantum simulators and new analytical perspectives on quantum gates.
\end{abstract}

\section{Introduction}

The idea of a computer based on quantum theory was first formulated by Richard Feynman \cite{Feynman} in 1981, during his talk ``Simulating physics with computers''. Feynman observed that simulating the evolution of quantum physical systems (elementary particles, atoms, or molecules) was computationally very expensive for classical machines. To solve this, he proposed harnessing the power of the, by then no longer so young, quantum theory itself to build a new type of computer capable of simulating nature natively.

To rigorously understand the foundations of this computation, we must be clear about certain concepts of quantum mechanics. This is a physical theory that allows us to make predictions about certain types of experiments involving an interaction between physical systems and certain measuring devices acting upon them \cite{isham1995lectures}. Furthermore, the mechanics is structured in two layers: the deepest and most fundamental layer is an abstract probability theory, Quantum Probability Theory (QPT) \cite{Pl_vala_2023}, and upon it rests quantum physics itself. QPT establishes the mathematical rules dictating how states, transformations, observables, and systems are represented, as well as how to compute the probabilities of the possible outcomes after performing measurements. It is precisely this abstract probabilistic foundation that we need to understand quantum information and computation.

Following the rules of QPT, the first postulate of quantum mechanics states that the state of an isolated system is mathematically represented by a vector in a complex Hilbert space \cite{sakurai2017modern}.

The quantum system that interests us the most is the two-level system: the qubit. We can express an arbitrary pure state of a single qubit $\ket{\psi}$ as a linear combination of the orthonormal basis vectors $\ket{0}$ and $\ket{1}$:

\begin{equation}
\label{eq1}
    \ket{\psi} = a \ket{0} + b\ket{1}  \hspace{2em} a, b \in \mathbb{C}, ~ |a|^2 + |b|^2 = 1,
\end{equation}
where $a$ and $b$ are the complex probability amplitudes associated with measuring the system in states $\ket{0}$ and $\ket{1}$, respectively. Being probability amplitudes, it is imperative that the sum of all partial probabilities is 1. Before continuing, it is crucial to emphasize that two vectors differing only by a global phase (i.e., $\ket{\psi}$ and $e^{i\varphi}\ket{\psi}$) represent the same quantum state, as they yield exactly the same probability distribution when measured and are therefore physically indistinguishable.

The conservation of probabilities and the absence of physical meaning for a global phase allow us to express our qubit in a much more graphical way: by means of the widely known Bloch sphere (see Figure \ref{fig:BlochS}).
\small{
\begin{equation}
\label{eq2}
    \ket{\psi} = \cos \frac{\theta}{2} \ket{0} + e^{\varphi  i } \sin \frac{\theta}{2} \ket{1}\hspace{1em} \theta \in [0,\pi], \varphi \in [0, 2\pi).
\end{equation}
}

\begin{figure}[htbp]

    \centering

    \includegraphics[width=1\linewidth]{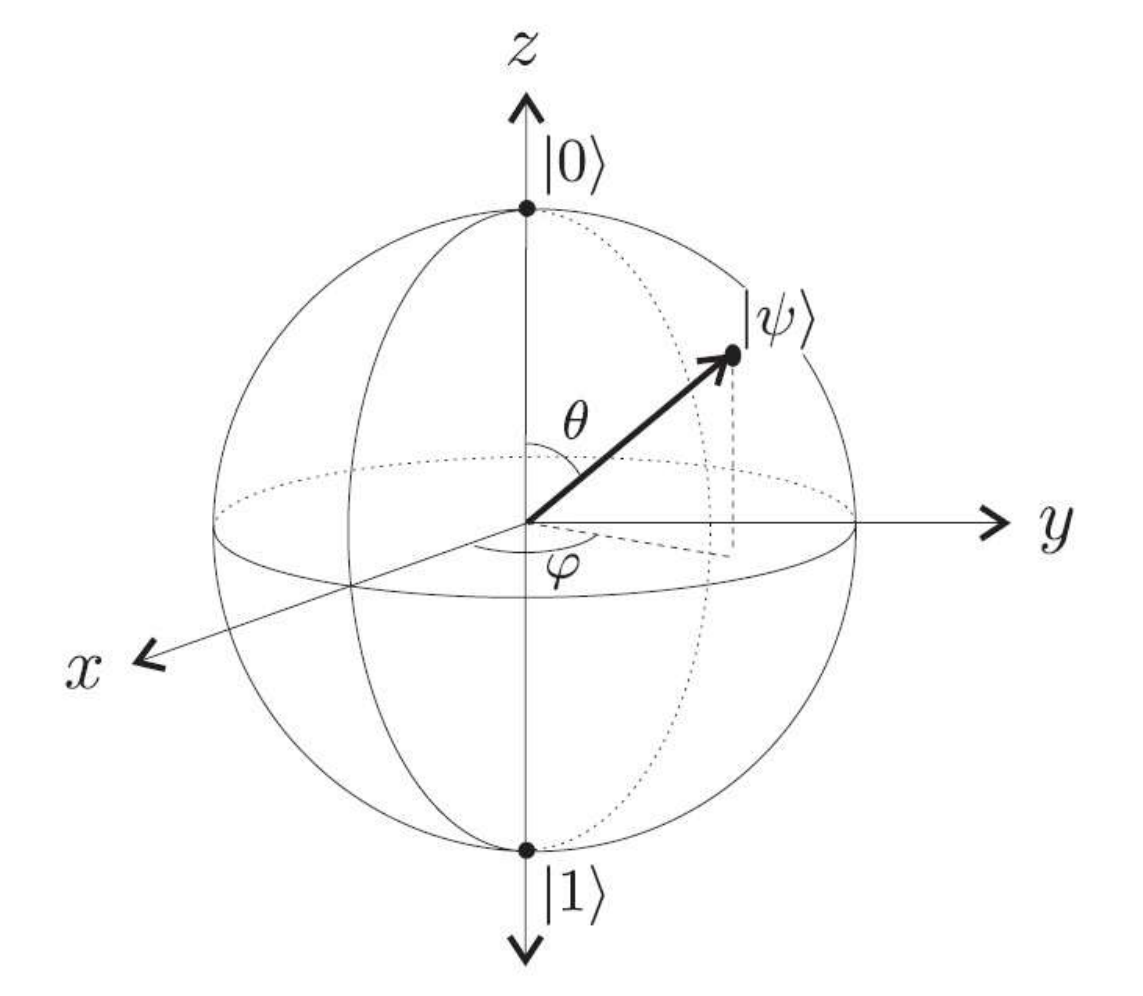}

    \caption{\label{fig:BlochS} State of a single qubit represented on the Bloch sphere.}

\end{figure}

With this parameterization, we can understand any pure state of a qubit as points on the surface of the Bloch sphere. The points inside the sphere itself are the so-called mixed states, but we will not delve into them here.

Within this theoretical framework, unitary operations, such as quantum logic gates or the time evolution of Hamiltonians, translate into rotations around the axes of the sphere. The fact that this representation is so visual and clear has made the Bloch sphere an indispensable tool for understanding the dynamics of single-qubit systems.

However, as we scale the system by increasing the number of qubits, this visual representation disappears. When we have systems with more than one qubit, entanglement appears, and we are no longer able to express the state of these systems using only the Bloch spheres of each individual qubit. To date, there is no widely accepted representation as intuitive and visual as the Bloch sphere for entangled multi-qubit systems, despite several attempts in the literature. 

Among these attempts, some preliminary work\cite{Mosseri} generalizes the standard Bloch sphere representation for two qubits within the framework of Hopf fibrations, requiring a 7-dimensional sphere. In \cite{Havel} it is shown that the states of a system of two qubits can be represented in a 6-dimensional geometric algebra quite similar to the Bloch Sphere. In \cite{wharton} it is reported that any pure two-qubit state can be represented by six real angles, with a natural parameterization induced by the bipartite structure. Up to a certain point this is a result close to ours, but there are some differences that we remark on in the following. And finally, we can cite \cite{Wie}, where a model of three Bloch Spheres is presented as a model for a 2 qubits system, by means of Hopf fibrations, as the first work cited, besides the fact that one of the spheres is not a unit sphere, thus they need an additional dimension to represent the radius of the sphere, so getting in the end 7 dimensions. Wang \cite{wang} applied a geometric algebra to analyze the space of a multi-qubit system, particularly two and three, so using a single angle to represent the entanglement in terms of the Von Newman entropy. Finally, in \cite{Dilley22} a model of a two qubit system is presented, where two spheres and a $3 \times 3$ correlation matrix are used.

\subsection{Contributions}

To bridge this gap, this paper proposes a novel and systematic extension of the Bloch sphere model capable of representing general $N$-qubit systems. We demonstrate that the complex amplitudes of an entangled composite state can be fully captured by introducing a hierarchy of ``entanglement spheres''. Specifically, while an isolated single qubit requires one sphere, a general 2-qubit system can be completely modeled using three spheres: one for each individual qubit, and a third sphere dedicated strictly to capturing their bipartite entanglement. 

Generalizing this principle, we show that any arbitrary $N$-qubit system can be completely and bijectively described using exactly $2^N-1$ spheres. Despite the inherent mathematical complexity of multi-qubit Hilbert spaces, this parameterization offers a surprisingly straightforward and highly structured geometric alternative, providing a clear advantage for describing, analyzing, and conceptualizing quantum states.

\section{Results}

\subsection{The model for two separated qubits}

A general state that represents a system of two separated qubits is
\small{
\begin{equation}
\label{eq5}
\ket{\psi}=\ket{Q_1}\otimes\ket{Q_2} = \left(a\ket{0}+b\ket{1}\right)\otimes\left(c\ket{0}+d\ket{1}\right),
\end{equation} 
}
where $a, b, c, d \in \mathbb{C}, ~ |a|^2 + |b|^2=1 $ ~$ |c|^2 + |d|^2 = 1$.

As we have seen, we can express each qubit with the Bloch sphere parameterization, so the two qubit separable state will be:
\begin{align} 
\label{eq6}
&\ket{Q_1} = \cos \frac{\theta_1}{2} \ket{0} + e^{ i \varphi_1 } \sin \frac{\theta_1}{2} \ket{1} \nonumber \\
&\ket{Q_2} = \cos \frac{\theta_2}{2} \ket{0} + e^{ i \varphi_2 } \sin \frac{\theta_2}{2} \ket{1} \nonumber \\
&\ket{Q_1} \otimes \ket{Q_2} = \cos \frac{\theta_1}{2} \cos \frac{\theta_2}{2} \ket{00} +  \nonumber \\
&\quad +\cos \frac{\theta_1}{2} \sin \frac{\theta_2}{2}  e^{i \varphi_2} \ket{01}+ \sin \frac{\theta_1}{2} \cos \frac{\theta_2}{2}  e^{i \varphi_1} \ket{10}+ \nonumber \\
&\quad + \sin \frac{\theta_1}{2} \sin \frac{\theta_2}{2}  e^{i (\varphi_1 + \varphi_2)} \ket{11} .
\end{align}
In this case, as we do not have any kind of entanglement, only 4 real dimensions are required to represent the system.

\subsection{Modelling the general 2-qubit system}

As discussed in the introduction, existing models for a two-qubit system require 6 or 7 real dimensions. An arbitrary two-qubit pure state is defined by 4 complex amplitudes, equating to 8 real parameters. However, by factoring in the normalization condition and disregarding the physically meaningless global phase, the system is reduced to 6 independent real parameters. Now we demonstrate that these 6 dimensions can be intuitively captured using three Bloch spheres. Two of these spheres, defined by the angles $(\theta_1, \varphi_1)$ and $(\theta_2, \varphi_2)$, encode the state of each individual qubit, while the third sphere, defined by $(\theta_3, \varphi_3)$ , is dedicated to modeling the bipartite entanglement. Finally, we will demonstrate that we can encode our sphere parameterization in a quantum circuit using controlled rotation matrices, which will allow us to lay the groundwork for the generalization to systems with multiple qubits.

\begin{theorem}\label{th1}
     Let $\ket{Q_1 Q_2}$ be a system of two qubits, represented by a vector in the computational basis with complex components $[x_1,x_2,x_3,x_4]$. Then there exist three Bloch spheres of coordinates $(\theta_j,\varphi_j)$, $j=1,2,3$, where $\theta_j \in [0, \pi]$ and $\varphi_j \in [0, 2\pi)$, such that 

\begin{align*}
    &x_1=\cos \frac{\theta_1}{2}\cos\frac{\theta_2}{2}\\
    &x_2=\cos \frac{\theta_1}{2}\sin\frac{\theta_2}{2}e^{i\varphi_2}\\
    &x_3=\sin \frac{\theta_1}{2}\cos\left(\frac{\theta_2}{2}+\theta_3\right)e^{i\varphi_1}\\
    &x_4=\sin \frac{\theta_1}{2}\sin\left(\frac{\theta_2}{2}+\theta_3\right)e^{i(\varphi_1+\varphi_2+\varphi_3)}.\\
\end{align*}
\end{theorem}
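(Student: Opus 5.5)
The plan is to read the statement, as the introduction intends, up to a global phase: the vector $[x_1,x_2,x_3,x_4]$ is normalized and may be multiplied by $e^{-i\arg x_1}$ (any phase if $x_1=0$). We may therefore assume $x_1\in\mathbb{R}_{\ge 0}$. Without this normalization the claim cannot hold, since the right-hand side always has $x_1\ge 0$ real. The construction then mimics the structure of Eq.~(\ref{eq6}): first split the state according to the first qubit, then read off the two conditional states of the second qubit.

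\textbf{Step 1 (first sphere, magnitude).} Set $r_0=\sqrt{|x_1|^2+|x_2|^2}$ and $r_1=\sqrt{|x_3|^2+|x_4|^2}$, so that $r_0^2+r_1^2=1$. Choose the unique $\theta_1\in[0,\pi]$ with $\cos\frac{\theta_1}{2}=r_0$ and $\sin\frac{\theta_1}{2}=r_1$.

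\textbf{Step 2 (second sphere).} If $r_0>0$, the vector $(x_1,x_2)/r_0$ is a normalized single-qubit state with nonnegative real first entry. By the standard Bloch parameterization (\ref{eq2}) there are $\theta_2\in[0,\pi]$ and $\varphi_2\in[0,2\pi)$ with $x_1=r_0\cos\frac{\theta_2}{2}$ and $x_2=r_0\sin\frac{\theta_2}{2}e^{i\varphi_2}$. If $r_0=0$, we simply take $\theta_2=\varphi_2=0$; the first two equations then hold trivially.

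\textbf{Step 3 (third sphere and $\varphi_1$).} If $r_1=0$, any $\theta_3,\varphi_1,\varphi_3$ work. Otherwise, write $x_3=|x_3|e^{i\alpha}$. Applying (\ref{eq2}) to $(|x_3|,\,x_4e^{-i\alpha})/r_1$ gives $\gamma\in[0,\pi]$ and $\beta$ with
\[
x_3=r_1\cos\tfrac{\gamma}{2}e^{i\alpha},\qquad x_4=r_1\sin\tfrac{\gamma}{2}e^{i(\alpha+\beta)} .
\]
We want $\frac{\theta_2}{2}+\theta_3=\frac{\gamma}{2}$, i.e.\ $\theta_3=\frac{\gamma-\theta_2}{2}\in[-\frac{\pi}{2},\frac{\pi}{2}]$.

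If this value is $\ge 0$, we put $\varphi_1=\alpha$ and $\varphi_3=\beta-\varphi_2$, both taken mod $2\pi$.

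If it is negative, we use the identities $\cos(t+\pi)=-\cos t$ and $\sin(t+\pi)=-\sin t$. Take $\theta_3=\frac{\gamma-\theta_2}{2}+\pi\in[\frac{\pi}{2},\pi)$ and $\varphi_1=\alpha+\pi$. The two sign flips in $\cos$ and $\sin$ are then exactly compensated by $e^{i\pi}$ in both $x_3$ and $x_4$. Hence $\varphi_3=\beta-\varphi_2$ is unchanged. All angles are finally reduced mod $2\pi$ into $[0,2\pi)$.

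Substituting the chosen angles back yields the four formulas of Theorem~\ref{th1}, which completes the argument.

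\textbf{Main obstacle.} The only delicate point is the range bookkeeping in Step 3. The required angle $\theta_3$ naturally lands in $[-\frac{\pi}{2},\frac{\pi}{2}]$, while the statement demands $\theta_3\in[0,\pi]$; the $\pi$-shift trick, absorbed into $\varphi_1$, resolves this. A secondary issue is the degenerate cases $r_0=0$ or $r_1=0$, where some angles are undetermined and are simply fixed arbitrarily. This is also why only existence, not uniqueness, is claimed here.
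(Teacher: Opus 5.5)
Your proof is correct and takes essentially the paper's route. You remove the global phase so that $x_1\ge 0$, read $\theta_1$ off the weight of the first qubit's $\ket{0}$ branch, take $(\theta_2,\varphi_2)$ from $(x_1,x_2)$, and obtain $\theta_3$ as an offset within the $\ket{1}$ branch, repairing out-of-range values with a $\pi$ phase shift and fixing degenerate cases by convention. Your explicit choice $\theta_3=\frac{\gamma-\theta_2}{2}+\pi$ with $\varphi_1\mapsto\varphi_1+\pi$, which flips $\cos$ and $\sin$ together so $\varphi_3$ is unchanged, is a cleaner justification than the paper's informal remark that sign problems ``can be offset with a $\pi$ phase shift'', because it directly exhibits an admissible $\theta_3\in[0,\pi]$.
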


\begin{proof}
We can write the elements of the computational basis of $\ket{Q_1Q_2}$ as $x_1=r_1$, $x_2=r_2e^{i\psi_1}$, $x_3=r_3e^{i\psi_2}$ and $x_4=r_4e^{i\psi_3}$, where we have disregarded the physically meaningless global phase by taking $r_j$, $j=1,\dots,4$ as positive real parameters. For the phases, we have that the system
\begin{equation}
    \left\{
      \begin{aligned}
        &\psi_1=\varphi_2 \\
        &\psi_2=\varphi_1\\
        &\psi_3=\varphi_1+\varphi_2+\varphi_3
      \end{aligned}
    \right. 
\end{equation}
can always be solved by taking $\varphi_1=\psi_2$, $\varphi_2=\psi_1$, and $\varphi_3=\psi_3-\psi_2-\psi_1$. Furthermore, we can add a phase shift of $\pi$ to these phases in case any of the moduli do not have the correct sign (we will look into this in depth shortly).

Let us now consider the moduli. We must prove that the following system of equations has a solution for any $r_j$, $j=1,\dots,4$:

\begin{equation}
    \left\{
      \begin{aligned}
        &r_1=\cos \frac{\theta_1}{2}\cos\frac{\theta_2}{2} \\
        &r_2=\cos \frac{\theta_1}{2}\sin\frac{\theta_2}{2}\\
        &r_3=\sin \frac{\theta_1}{2}\cos\left(\frac{\theta_2}{2}+\theta_3\right)\\
        &r_4=\sin \frac{\theta_1}{2}\sin\left(\frac{\theta_2}{2}+\theta_3\right).
      \end{aligned}
    \right. \label{eq:proof1_moduli1}
\end{equation}
Additionally, the normalization condition must be met:
\begin{equation}
    \sum_{j=1}^4r_j^2=1.
    \label{eq:proof1_normalization}
\end{equation}

It is easy to verify that, if we substitute the first three equations of \eqref{eq:proof1_moduli1} into \eqref{eq:proof1_normalization}, we obtain the fourth equation of \eqref{eq:proof1_moduli1}, so we can disregard it and focus only on the first three. By operating, we arrive at the following solutions: 

\begin{equation}
    \left\{
      \begin{aligned}
        &\cos^2(\frac{\theta_1}{2})=r_1^2+r^2_2 \\
        &\cos^2(\frac{\theta_2}{2})=\frac{r_1^2}{r_1^2+r_2^2}\\
        &\cos^2\left(\frac{\theta_2}{2}+\theta_3\right)=\frac{r_3^2}{1-(r_1^2+r_2^2)},\\
      \end{aligned}
    \right. \label{eq:proof1_moduli2}
\end{equation}

which are valid as long as $r_1^2+r_2^2\neq0$ and $r_1^2+r_2^2\neq1$. When $r_1^2+r_2^2=0$, it means that $r_1=r_2=0$ and, by \eqref{eq:proof1_moduli1}, it means that $\theta_1=\pi$ and $\theta_2$ can take any value. To ensure our parameterization is unique, in these cases we will take $\theta_2=0$ as the canonical value. Similarly, if $r_1^2+r_2^2=1$, by \eqref{eq:proof1_normalization} we have that $r_3^2=r_4^2=0$ and $\theta_1=0$, meaning $\theta_3$ is arbitrary. Here we will also take $\theta_3=0$ as the canonical value.

We must realize that, although $r_3$ is non-negative by definition, since $\left(\theta_2/2+\theta_3\right)\in [0, 3\pi/2]$, the term $\cos\left(\theta_2/2+\theta_3\right)$ in \eqref{eq:proof1_moduli1} can take negative values. However, this does not jeopardize our proof because, as mentioned earlier, any unwanted sign when constructing our state vector from the parameterization angles can be offset with a $\pi$ phase shift in the corresponding phase.

\end{proof}

\bigskip
Having established our parameterization in Theorem \ref{th1}, we now demonstrate how it is easily translated into a quantum circuit. To achieve this, we take as our starting point the separable state (Eq. \ref{eq6}) and systematically introduce the entanglement by manipulating both the phase and the modulus of the amplitudes. First, however, we must properly define our working tools.

In standard quantum circuit design, as shown in \cite{Nielsen}, an arbitrary single-qubit state can be reached via a sequence of rotations along the $Z$ and $Y$ axes:

\begin{align} 
\label{eq7}
R_z(\varphi) &= \begin{pmatrix}
1 & 0 \\
0 & e^{i \varphi}
\end{pmatrix} & 
R_y(\theta) &= \begin{pmatrix}
\cos \theta & - \sin \theta \\
\sin \theta & \cos \theta 
\end{pmatrix}.
\end{align}

Our definition of rotation in $Z$ differs from Nielsen and Chuang's in a global phase in order to keep the first term of the state vector phase-free.

To extend these rotations to multiqubit systems and capture entanglement, we adopt a notation for controlled quantum gates inspired by Barenco et al \cite{Barenco95}.

\begin{definition}
     Let $\ket{Q_1 \ldots Q_n}$ be a system of $n$ qubits, let $U$ be a unitary gate acting on a single qubit, let $i \in \{2 \ldots n\}$ be the index inside the register of the qubit on which $U$ acts, and let $J \subset \{1 \ldots i - 1\}$ be a set of qubits. We denote the gate acting on qubit $i$ controlled by qubits from the set $J$ as $\bigwedge_i^J U$. 
\end{definition}

If $J= \emptyset$, we have a not controlled gate, thus $\bigwedge_i^\emptyset U = I^{\otimes^{i-1}} \otimes U \otimes I^{\otimes^{n-i-1}}$. With these mathematical tools defined, we can proceed to build the composite state.

\vspace{0.5cm}

{\bf Entanglement in the phase component}

Let us first analyze the phase components of the basis vectors in a separable state. The state $\ket{00}$ acts as our phase reference and lacks a relative phase. The amplitudes of $\ket{01}$ and $\ket{10}$ inherit the phases of qubits $Q_1$ and $Q_2$ respectively. Consequently, in a separable state, the phase of the $\ket{11}$ amplitude is constrained, being the sum of the two preceding phases.

However, for an arbitrary, non-separable state vector $(a, b, c, d) \in \mathbb{C}^4$, the phase of the fourth component can be independent. To capture this independence, we must introduce an additional parameter, the ``entanglement phase''. 

In order to establish a notation clear enough to allow for generalization to multi-qubit systems, we will denote the angles of the controlled rotations (i.e., of the entanglement spheres) as $\{\varphi_{J,k},\theta_{J,k}\}$, where $J$ is the set of control qubits and $k$ is the target qubit. Thus, our bipartite entanglement phase will be denoted as $\varphi_{1,2}$.

Using our previously defined notation, the introduction of this entanglement phase, which only acts upon the $\ket{11}$ amplitude, is represented by a $Z$-rotation on qubit 2 controlled by qubit 1. The matrix representation for $\bigwedge_2^1 R_z(\varphi_{1,2})$ is defined as:

\begin{equation} 
\label{eq8}
\bigwedge_2^1 R_z(\varphi_{1,2}) = \begin{pmatrix}
1 & 0 & 0 & 0 \\
0 & 1 & 0 & 0 \\
0 & 0 & 1 & 0 \\
0 & 0 & 0 & e^{i \varphi_{1,2}}
\end{pmatrix}.
\end{equation}

This controlled rotation establishes the longitudinal component of our third Bloch sphere.

{\bf Entanglement in the amplitude component}

Having established the phase parameterization, we now define the behavior of the latitudinal component of the third sphere, represented by the angle $\theta_{1,2}$.

In a separable state, the moduli of the amplitudes are strictly determined by the tensor product of the individual qubit spheres. To represent a general entangled state, we require a mechanism to redistribute the probability weights among the basis states, specifically between $\ket{10}$ and $\ket{11}$.

This redistribution is mathematically analogous to the function of a CNOT gate, which is the standard mechanism for generating maximally entangled Bell states (e.g. $\ket{\beta_{00}} = \frac{\sqrt 2}{2} \ket{00} + \frac{\sqrt 2}{2} \ket{11}$, as shown in Figure \ref{fig:BellS})

\begin{figure}[htbp]
    \centering
    \scalebox{1.3}{
    \begin{quantikz}[column sep=0.8em, row sep=1.2em]
        \lstick{$\ket{0}$} & \qw & \gate{H} & \qw & \ctrl{1} & \qw \\
        \lstick{$\ket{0}$} & \qw & \qw      & \qw & \targ    & \qw &
    \end{quantikz}
    }
    \caption{Bell State}
    \label{fig:BellS}
\end{figure}

It is evident from Eq. \eqref{eq6} that no combination of independent single-qubit angles can yield a Bell state via tensor product. The CNOT gate achieves this by interchanging the amplitudes of the $\ket{01}$ and $\ket{11}$ basis vectors. To allow for a continuous and complete redistribution capable of reaching any point in the Hilbert space, we generalize this action by introducing a controlled $Y$-rotation, denoted as $\bigwedge_2^1 R_y(\theta_{1,2})$:

\begin{equation} 
\label{eq10}
\bigwedge_2^1 R_y(\theta_{1,2}) = \begin{pmatrix}
1 & 0 & 0 & 0 \\
0 & 1 & 0 & 0 \\
0 & 0 & \cos \theta_{1,2} & - \sin \theta_{1,2} \\
0 & 0 & \sin \theta_{1,2} & \cos \theta_{1,2} 
\end{pmatrix} .
\end{equation}

By hierarchically applying these controlled transformations to the separable tensor product state $\ket{Q_1} \otimes \ket{Q_2}$ (we will see this hierarchy in detail in the next section), we can fully parameterize any general two-qubit quantum state using our proposed three-sphere model, defined by the parameters $\{(\theta_1, \varphi_1), (\theta_2, \varphi_2),(\theta_{1,2}, \varphi_{1,2}) \}$.

The complete state is given by:

\begin{small} 
\begin{align} 
\label{eq11}
&\ket{Q_1 Q_2} = \bigwedge_2^1 R_z(\varphi_{1,2}) \cdot \bigwedge_2^1 R_y(\theta_{1,2}) \cdot (\ket{Q_1} \otimes \ket{Q_2}) \nonumber \\  
&= \bigwedge_2^1 R_z(\varphi_{1,2}) \cdot \bigwedge_2^1 R_y(\theta_{1,2}) \cdot \bigg[ \bigg( \cos \frac{\theta_1}{2} \ket{0}+  \nonumber \\
&\quad+ e^{i \varphi_1} \sin \frac{\theta_1}{2} \ket{1} \bigg) \otimes \left( \cos \frac{\theta_2}{2} \ket{0} + e^{i \varphi_2} \sin \frac{\theta_2}{2} \ket{1} \right) \bigg] \nonumber \\
&= \bigwedge_2^1 R_z(\varphi_{1,2}) \cdot \bigwedge_2^1 R_y(\theta_{1,2}) \cdot \bigg[ \cos \frac{\theta_1}{2} \cos \frac{\theta_2}{2} \ket{00} +\nonumber \\
&\quad + \cos \frac{\theta_1}{2} \sin \frac{\theta_2}{2}  e^{i \varphi_2} \ket{01} + \sin \frac{\theta_1}{2} \cos \frac{\theta_2}{2}  e^{i \varphi_1} \ket{10}+ \nonumber \\
&\quad + \sin \frac{\theta_1}{2} \sin \frac{\theta_2}{2}  e^{i (\varphi_1 + \varphi_2)} \ket{11} \bigg] .
\end{align}
\end{small}

As this expression shows, applying these controlled transformations to the separable state yields the amplitudes and phases derived in Theorem \ref{th1}. Just as the single qubit state is represented geometrically by the two angles of a standard Bloch sphere, the bipartite entanglement is similarly mapped into the system via controlled rotations encoded by two additional angles

Observing the arguments in Eq. \eqref{eq11}, we can see that there is an asymmetry between the latitudinal angles of our model: those representing single-qubit spheres ($\theta_1$ and $\theta_2$) carry a $1/2$ factor, whereas the one corresponding to the sphere encoding the entanglement ($\theta_{1,2}$) does not. This is because the entanglement parameter does not map to a single orthogonal state but rather acts as a mixing parameter that must be capable of completely redistributing the amplitudes among the entangled states.

In Eq. \eqref{eq11}, we construct the representation by first taking the tensor product of two separate qubits and subsequently applying controlled rotations. While this is intuitive for understanding the physical introduction of entanglement, it presents a significant mathematical drawback for scaling the model. Applying the tensor product early distributes the single-qubit phases among the vector components, which significantly complicates the algebraic structure of the controlled rotations when scaling to 3 or more qubits. To resolve this and establish a framework capable of scaling to $N$ qubits, we must separate the amplitude generation from the phase assignment. We can reorganize our quantum circuit by applying all latitudinal rotations (around the $Y$-axis) first, followed by all longitudinal rotations (around the $Z$-axis). This grouping avoids algebraic mixing and leads to the definition of a unique normal form for the two-qubit system.

\begin{corollary}
    Given a general quantum system of two qubits, there exists a unique normal form to represent it by means of three Bloch spheres, systematically separating amplitude and phase operations:
\begin{small}
\begin{align} 
\label{FN2}
&\ket{Q_1 Q_2} = \bigwedge_2^1 R_z(\varphi_{1,2}) \cdot \bigwedge_1^\emptyset R_z(\varphi_1) \cdot \bigwedge_2^\emptyset R_z(\varphi_2) \nonumber \\
&\quad \cdot \bigwedge_2^1 R_y(\theta_{1,2}) \cdot \bigwedge_1^\emptyset R_y\left(\frac{\theta_1}{2}\right) \cdot \bigwedge_2^\emptyset R_y\left(\frac{\theta_2}{2}\right) \cdot \ket{00} \nonumber \\
&= \cos \frac{\theta_1}{2} \cdot \cos \frac{\theta_2}{2} \ket{00} + \cos \frac{\theta_1}{2} \cdot \sin \frac{\theta_2}{2} \cdot e^{i \varphi_2} \ket{01}+ \nonumber \\
&\quad + \sin \frac{\theta_1}{2} \cdot \cos \left(\frac{\theta_2}{2} + \theta_{1,2}\right) \cdot e^{i \varphi_1} \ket{10}+ \nonumber \\
&\quad + \sin \frac{\theta_1}{2} \cdot \sin \left(\frac{\theta_2}{2} + \theta_{1,2}\right) \cdot e^{i (\varphi_1 + \varphi_2 + \varphi_{1,2})} \cdot \ket{11} .
\end{align}
\end{small}
\end{corollary}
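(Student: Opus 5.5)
The plan is to prove the corollary in two parts. First I will verify directly that the displayed gate sequence applied to $\ket{00}$ yields the displayed amplitudes. Then I will reduce existence and uniqueness of the normal form to Theorem \ref{th1}, together with the canonical conventions fixed in its proof.

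\textbf{Computing the circuit.} With the paper's convention $R_y(\theta)\ket{0}=\cos\theta\ket{0}+\sin\theta\ket{1}$, the product $\bigwedge_1^\emptyset R_y(\theta_1/2)\cdot\bigwedge_2^\emptyset R_y(\theta_2/2)\ket{00}$ is the real separable vector with components
\[
\cos\tfrac{\theta_1}{2}\cos\tfrac{\theta_2}{2},\quad \cos\tfrac{\theta_1}{2}\sin\tfrac{\theta_2}{2},\quad \sin\tfrac{\theta_1}{2}\cos\tfrac{\theta_2}{2},\quad \sin\tfrac{\theta_1}{2}\sin\tfrac{\theta_2}{2}.
\]
Next, $\bigwedge_2^1 R_y(\theta_{1,2})$ in \eqref{eq10} acts only on the $(\ket{10},\ket{11})$ block. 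It rotates $(\cos\frac{\theta_2}{2},\sin\frac{\theta_2}{2})$ by the angle $\theta_{1,2}$, and the angle-addition formulas give $(\cos(\frac{\theta_2}{2}+\theta_{1,2}),\sin(\frac{\theta_2}{2}+\theta_{1,2}))$, scaled by $\sin\frac{\theta_1}{2}$. The three $Z$-type gates are diagonal, so they commute and their order is irrelevant:
\begin{itemize}
\item $\bigwedge_1^\emptyset R_z(\varphi_1)$ multiplies $\ket{10}$ and $\ket{11}$ by $e^{i\varphi_1}$;
\item $\bigwedge_2^\emptyset R_z(\varphi_2)$ multiplies $\ket{01}$ and $\ket{11}$ by $e^{i\varphi_2}$;
\item $\bigwedge_2^1 R_z(\varphi_{1,2})$ in \eqref{eq8} multiplies only $\ket{11}$ by $e^{i\varphi_{1,2}}$.
\end{itemize}
Collecting these factors gives exactly the right-hand side of \eqref{FN2}. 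Since these are the formulas of Theorem \ref{th1} with $\theta_3=\theta_{1,2}$ and $\varphi_3=\varphi_{1,2}$, existence follows at once: for every normalized vector $[x_1,x_2,x_3,x_4]$, with the global phase fixed so that $x_1\ge 0$, Theorem \ref{th1} supplies angles, and the circuit with those angles reproduces the vector. This also explains why the normal form separates amplitude and phase: all $R_y$ gates act first on a real vector, and only diagonal phase gates follow.

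\textbf{Uniqueness.} I will read uniqueness as uniqueness of the angle tuple under fixed ranges and canonical choices. I would invert the formulas step by step.
\begin{itemize}
\item $\theta_1\in[0,\pi]$ is determined by $\cos^2\frac{\theta_1}{2}=r_1^2+r_2^2$, because $\cos\frac{\theta_1}{2}\ge 0$ on that range.
\item $\theta_2$ is then determined by $\cos^2\frac{\theta_2}{2}=r_1^2/(r_1^2+r_2^2)$, as in \eqref{eq:proof1_moduli2}.
\item The pair $(\cos(\frac{\theta_2}{2}+\theta_{1,2}),\sin(\frac{\theta_2}{2}+\theta_{1,2}))$ is the normalized vector $(x_3e^{-i\varphi_1},\,x_4e^{-i(\varphi_1+\varphi_2+\varphi_{1,2})})/\sin\frac{\theta_1}{2}$. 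This pins down the angle $\frac{\theta_2}{2}+\theta_{1,2}$ modulo $2\pi$.
\item The phases are determined modulo $2\pi$ by $\varphi_2=\arg x_2$, $\varphi_1=\arg x_3$, and $\varphi_{1,2}=\arg x_4-\varphi_1-\varphi_2$.
\end{itemize}

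\textbf{Main obstacle.} The hard step is the sign ambiguity noted at the end of the proof of Theorem \ref{th1}. A negative cosine or sine can be traded for a $\pi$ shift in a phase, so the tuple is unique only after a convention fixes one representative. I would try the following: require the moduli factors to be non-negative, so $\frac{\theta_2}{2}+\theta_{1,2}\in[0,\frac{\pi}{2}]$, which puts $\theta_{1,2}$ in a $\theta_2$-dependent interval of length $\pi/2$. All signs are then carried by the phases in $[0,2\pi)$, which makes the map injective on the generic region.

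The degenerate cases still need care.
\begin{itemize}
\item $r_1^2+r_2^2\in\{0,1\}$: I would adopt the canonical values $\theta_2=0$, respectively $\theta_{1,2}=0$, from the proof of Theorem \ref{th1}.
\item Some $x_j=0$: the corresponding phase is undefined, and I would set it to $0$ by convention.
\end{itemize}
With these conventions fixed, each state has exactly one tuple, which establishes the unique normal form.
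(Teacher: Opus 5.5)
Your argument is correct, but it follows a different route from the paper's. The paper obtains the normal form from Eq.~\eqref{eq11} by reordering the gates ``by commutation'' and then invokes the bijection of Theorem~\ref{th1}. You instead evaluate the normal-form circuit on $\ket{00}$ directly, match its output with the formulas of Theorem~\ref{th1} for existence, and prove uniqueness by explicit inversion.

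Your route is the sounder one on both counts. The reordering the paper relies on must move $\bigwedge_2^\emptyset R_z(\varphi_2)$, which is hidden inside $\ket{Q_2}$, past $\bigwedge_2^1 R_y(\theta_{1,2})$. These act on the same target qubit about different axes and do not commute, as the paper itself remarks in the 3-qubit section. Indeed, Eq.~\eqref{eq11} gives the $\ket{10}$ amplitude $\sin\frac{\theta_1}{2}e^{i\varphi_1}\bigl(\cos\theta_{1,2}\cos\frac{\theta_2}{2}-\sin\theta_{1,2}\sin\frac{\theta_2}{2}e^{i\varphi_2}\bigr)$, not $\sin\frac{\theta_1}{2}\cos(\frac{\theta_2}{2}+\theta_{1,2})e^{i\varphi_1}$. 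Your direct computation avoids this problem entirely.

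On uniqueness, you correctly identify the sign ambiguity as the crux. With $\theta_{1,2}\in[0,\pi]$ and phases in $[0,2\pi)$, the parameter-to-state map is generically two-to-one. For instance, $(\theta_1,\theta_2,\theta_{1,2},\varphi_1,\varphi_2,\varphi_{1,2})=(\frac{\pi}{2},\frac{\pi}{2},\frac{\pi}{12},0,0,0)$ and $(\frac{\pi}{2},\frac{\pi}{2},\frac{5\pi}{12},\pi,0,\pi)$ both give $\frac{1}{2}\ket{00}+\frac{1}{2}\ket{01}+\frac{\sqrt{2}}{4}\ket{10}+\frac{\sqrt{6}}{4}\ket{11}$. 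So the paper's appeal to a ``bijection'' does not hold in its stated ranges, and a convention like yours is unavoidable. Yours pays for it by letting $\theta_{1,2}$ become negative, outside the usual Bloch range.

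One small patch remains. Fixing the global phase by $x_1\ge 0$ does nothing when $x_1=0$. In that case, multiplying the vector by $e^{i\gamma}$ changes your phase angles without changing the physical state. Normalize instead so that the first nonzero amplitude is real and positive; together with your other conventions, this gives exactly one tuple per state.
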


\begin{proof}
    This follows directly from Eq. \eqref{eq11} and the bijection established in Theorem 1. By leveraging the commutation properties of the controlled rotation gates (which we will formalize in the next section) we can safely reorder the operations into this strict hierarchy without altering the final quantum state. 

\end{proof}

\bigskip

\subsection{Modelling the 3-qubit system.
}
For a 3-qubit system we have three possibilities: No entanglement i.e. separable states, entanglement only between two out of three qubits and all the three qubits entangled.

In the case in which we have no entanglement at all, our model will be only 3 Bloch spheres, one for each qubit, and therefore the total state will be:

\begin{small}
\begin{align*} 
&\ket{\psi} = \bigwedge_1^\emptyset R_z(\varphi_1)\bigwedge_2^\emptyset R_z(\varphi_2)\bigwedge_3^\emptyset R_z(\varphi_3) \\
&\qquad \cdot \bigwedge_1^\emptyset R_y\left(\frac{\theta_1}{2}\right) \bigwedge_2^\emptyset R_y\left(\frac{\theta_2}{2}\right)\bigwedge_3^\emptyset R_y\left(\frac{\theta_3}{2}\right)\ket{000} \\
&\quad = \cos{\frac{\theta_1}{2}} \cdot \cos{\frac{\theta_2}{2}} \cdot \cos{\frac{\theta_3}{2}}\ket{000}+ \\
&\qquad + \cos{\frac{\theta_1}{2}} \cdot \cos{\frac{\theta_2}{2}} \cdot \sin{\frac{\theta_3}{2}} \cdot  e^{i\varphi_3}\ket{001}+ \\
&\qquad + \cos{\frac{\theta_1}{2}} \cdot \sin{\frac{\theta_2}{2}} \cdot \cos{\frac{\theta_3}{2}} \cdot e^{i\varphi_2}\ket{010}+ \\
&\qquad + \cos{\frac{\theta_1}{2}} \cdot \sin{\frac{\theta_2}{2}} \cdot \cos{\frac{\theta_3}{2}} \cdot e^{i(\varphi_2+\varphi_3)}\ket{011}+ \\
&\qquad + \sin{\frac{\theta_1}{2}} \cdot \cos{\frac{\theta_2}{2}} \cdot \cos{\frac{\theta_3}{2}} \cdot e^{i\varphi_1}\ket{100}+ \\
&\qquad + \sin{\frac{\theta_1}{2}} \cdot \cos{\frac{\theta_2}{2}} \cdot \sin{\frac{\theta_3}{2}} \cdot e^{i(\varphi_1+\varphi_3)}\ket{101}+ \\
&\qquad + \sin{\frac{\theta_1}{2}} \cdot \sin{\frac{\theta_2}{2}} \cdot \cos{\frac{\theta_3}{2}} \cdot e^{i(\varphi_1+\varphi_2)}\ket{110}+ \\
&\qquad + \sin{\frac{\theta_1}{2}} \cdot \sin{\frac{\theta_2}{2}} \cdot \sin{\frac{\theta_3}{2}} \cdot e^{i(\varphi_1+\varphi_2+\varphi_3)}\ket{111} .
\end{align*}
\end{small}

Here, the rotation matrices are now of size $8 \times 8$.

For a general 3-qubit pure state, we need 14 free parameters, therefore 7 different Bloch spheres. If we follow our previous procedure we end up with:

\begin{small}
\begin{align} 
\label{eq3esf_wrong}
&\ket{Q_1 Q_2 Q_3} = \bigwedge_2^1 R_z(\varphi_{1,2}) \bigwedge_3^2 R_z(\varphi_{2,3}) \bigwedge_3^1 R_z(\varphi_{1,3}) \nonumber \\
&\quad \cdot \bigwedge_1^\emptyset R_z(\varphi_1) \bigwedge_2^\emptyset R_z(\varphi_2) \bigwedge_3^\emptyset R_z(\varphi_3)\cdot \bigwedge_2^1 R_y(\theta_{1,2}) \nonumber \\
&\quad \cdot\bigwedge_3^2 R_y(\theta_{2,3}) \bigwedge_3^1 R_y(\theta_{1,3})\cdot \bigwedge_1^\emptyset R_y\left(\frac{\theta_1}{2}\right) \nonumber \\
&\quad  \cdot \bigwedge_2^\emptyset R_y\left(\frac{\theta_2}{2}\right) \bigwedge_3^\emptyset R_y\left(\frac{\theta_3}{2}\right) \ket{000}.
\end{align}
\end{small}

As in the two-qubit system, the $Y$-rotation angle corresponding to a ``qubit sphere'' must have a $1/2$ factor, while the angles corresponding to ``entanglement spheres'' do not. However, here we only have 6 spheres: the three spheres corresponding to each qubit and three additional spheres corresponding to the bipartite entanglement (qubit 1 with qubit 2, qubit 2 with qubit 3 and qubit 1 with qubit 3). Here we are only taking into account the entanglement between pairs of qubits. One sphere is missing: we must take into account the tripartite entanglement by introducing rotations on the third qubit controlled by the other two:

\begin{small}
\begin{align} 
\label{eq3esf_good}
&\ket{Q_1 Q_2 Q_3} = \bigwedge_3^{\{1,2\}} R_z(\varphi_{\{1,2\},3}) \bigwedge_3^2 R_z(\varphi_{2,3}) \bigwedge_3^1 R_z(\varphi_{1,3}) \nonumber \\
&\quad \cdot \bigwedge_2^1 R_z(\varphi_{1,2}) \bigwedge_1^\emptyset R_z(\varphi_1) \bigwedge_2^\emptyset R_z(\varphi_2) \bigwedge_3^\emptyset R_z(\varphi_3) \nonumber \\
&\quad \cdot \bigwedge_3^{\{1,2\}} R_y(\theta_{\{1,2\},3}) \bigwedge_3^2 R_y(\theta_{2,3}) \bigwedge_3^1 R_y(\theta_{1,3}) \nonumber \\
&\quad \cdot \bigwedge_2^1 R_y(\theta_{1,2}) \bigwedge_1^\emptyset R_y\left(\frac{\theta_1}{2}\right) \bigwedge_2^\emptyset R_y\left(\frac{\theta_2}{2}\right)  \nonumber \\
&\quad \cdot \bigwedge_3^\emptyset R_y\left(\frac{\theta_3}{2}\right) \ket{000} \nonumber \\
&= \sum_{xyz\in\{0,1\}} A_{xyz}e^{i\Phi_{xyz}}\ket{xyz} ,
\end{align}
\end{small}

where the coefficients are in the tables \ref{tab:coef} and \ref{tab:fases}.

\begin{table}[ht]
    \centering
    \renewcommand{\arraystretch}{0.5}
    \begin{tabular}{cc}
        \toprule
        \toprule
        \textbf{$\ket{xyz}$} & \textbf{$A_{xyz}$} \\
        \midrule
        
        000 & $\cos(\theta_1/2)\cos(\theta_2/2)\cos(\theta_3/2)$  \\
        \midrule
        
        001 & $\cos(\theta_1/2)\cos(\theta_2/2)\sin(\theta_3/2)$  \\
        \midrule
        
        010 & $\cos(\theta_1/2)\sin(\theta_2/2)\cos{(\theta_3/2+\theta_{2,3})}$  \\
        \midrule
        
        011 & $\cos(\theta_1/2)\sin(\theta_2/2)\sin{(\theta_3/2+\theta_{2,3})}$  \\
        \midrule
        
        100 & $\sin(\theta_1/2)\cos(\theta_2/2+\theta_{1,2})\cos{(\theta_3/2+\theta_{1,3})}$  \\
        \midrule
        
        101 & $\sin(\theta_1/2)\cos(\theta_2/2+\theta_{1,2})\sin{(\theta_3/2+\theta_{1,3})}$  \\
        \midrule
        
        110 & $\begin{aligned} &\sin(\theta_1/2)\sin(\theta_2/2+\theta_{1,2}) \\[-0.5ex] &\cdot \cos{(\theta_3/2+\theta_{1,3}+\theta_{2,3}+\theta_{\{1,2\}3})} \end{aligned}$  \\
        \midrule
        
        111 & $\begin{aligned} &\sin(\theta_1/2)\sin(\theta_2/2+\theta_{1,2}) \\[-0.5ex] &\cdot \sin{(\theta_3/2+\theta_{1,3}+\theta_{2,3}+\theta_{\{1,2\}3})} \end{aligned}$  \\
        
        \bottomrule
        \bottomrule
    \end{tabular}
    \caption{Moduli coefficients of the general expression for the 3 qubit case.}
    \label{tab:coef}
\end{table}

\begin{table}[ht]
    \centering
    \renewcommand{\arraystretch}{0.5}
    \begin{tabular}{cc}
        \toprule
        \toprule
        \textbf{$\ket{xyz}$} & \textbf{$\Phi_{xyz}$} \\
        \midrule
        
        000 & $0$ \\
        \midrule
        
        001 &  $\varphi_3$ \\
        \midrule
        
        010 &  $\varphi_2$ \\
        \midrule
        
        011 &  $\varphi_{2,3}+\varphi_3+\varphi_2$ \\
        \midrule
        
        100 &  $\varphi_1$ \\
        \midrule
        
        101 &  $\varphi_{1,3}+\varphi_3+\varphi_1$ \\
        \midrule
        
        110 &  $\varphi_{1,2}+\varphi_2+\varphi_1$ \\
        \midrule
        
        111 &  $\varphi_{\{12\},3}+\varphi_{2,3}+\varphi_{1,3}+\varphi_{1,2}  +\varphi_3+\varphi_2+\varphi_1$ \\
        
        \bottomrule
        \bottomrule
    \end{tabular}
    \caption{Phase coefficients of the general expression for the 3 qubit case.}
    \label{tab:fases}
\end{table}

We are now in a position to infer certain properties of our equation that will be crucial when generalizing our sphere parameterization to an $N$-qubit system. Let us consider each basis vector as a 3-bit word:

\begin{enumerate}

    \item The amplitude of each basis vector is given by a product of sines and cosines, matching the number of qubits (3 in this case). If a bit in the basis vector's binary word is 1, the corresponding function is a sine; if it is 0, the function is a cosine. For example, for the vector $\ket{010}$, the amplitude follows a ``$\cos\sin\cos$'' pattern, whereas for $\ket{111}$, it is ``$\sin\sin\sin$''.

    \item The argument for each of these trigonometric functions is constructed as follows: first, the angle corresponding to its specific qubit includes a $1/2$ factor, as it represents a single-qubit sphere rather than an entanglement sphere. To this angle, we add the angles corresponding to rotations controlled by any preceding bits in the binary word that are set to 1, along with any relevant combinations of these preceding 1-bits. Because these added angles represent entanglement spheres, they do not carry the $1/2$ factor.
    
    For instance, in the case of $\ket{010}$, the amplitude is:
    \begin{equation*}
        \cos(\theta_1/2)\sin(\theta_2/2)\cos{(\theta_3/2+\theta_{2,3})}.
    \end{equation*}
    
    Neither the first nor the second bit is preceded by a 1, so their arguments are simply the angles of their respective single-qubit spheres, $\theta_1/2$ and $\theta_2/2$. However, the third bit is preceded by a 1 (the second bit), so its argument must include both its single-qubit sphere angle ($\theta_3/2$) and the rotation controlled by the second bit ($\theta_{2,3}$).
    
    Conversely, for the $\ket{111}$ case, we have:
    \begin{align*}
        &\sin(\theta_1/2)\sin(\theta_2/2+\theta_{1,2}) \\
        &\quad \cdot \sin(\theta_3/2+\theta_{1,3}+\theta_{2,3}+\theta_{\{1,2\}3}).
    \end{align*}
    
    Here, since all three bits are 1, the argument for the third bit consists of its single-qubit sphere angle ($\theta_3/2$), followed by the rotations controlled by each of the preceding qubits set to 1 ($\theta_{1,3}$ and $\theta_{2,3}$), and  the combination of the two preceding 1-bits ($\theta_{\{1,2\}3}$).

    \item Finally, regarding the phase angle, there will be a single overall phase for each basis vector. This phase is the sum of the phase angles from the single-qubit spheres of the bits set to 1, plus the phase angles from the controlled rotations where the control bits (which are the ones set to 1) act on target bits (which are also set to 1), including their corresponding combinations. (We do not add controlled rotations if the target bit is not set to 1). 
    
    Continuing with the previous examples: for $\ket{010}$, the phase angle is simply $\varphi_2$. For $\ket{111}$, the phase angle is the sum $\varphi_{\{1,2\},3}+\varphi_{2,3}+\varphi_{1,3}+\varphi_{1,2}+\varphi_3+\varphi_2+\varphi_1$.

\end{enumerate}

To ensure the angles combine in this specific manner, we cannot apply the rotations to the $\ket{000}$ state arbitrarily, as not all of these rotations commute.

It is trivial to see that phase rotations ($R_z$) always commute with each other, as they are diagonal, regardless of their control and target qubits:

\begin{equation}
\label{eq:commuteZ}
\left[\bigwedge_i^J R_z(\varphi_{J,i}), \bigwedge_{i'}^{J'} R_z(\varphi_{J',i'})\right] = 0.
\end{equation}

However, controlled amplitude rotations ($R_y$) do not generally commute. Specifically, two controlled $R_y$ gates fail to commute if the target qubit of one gate acts as the control qubit for the other ($i \in J'$ or $i' \in J$), for example: 

\begin{align}
\label{eq:notcommuteY}
\left[\bigwedge_2^1 R_y(\theta_{1,2}), \bigwedge_3^2 R_y(\theta_{2,3})\right] &\neq 0 \\
\left[\bigwedge_1^\emptyset R_y(\theta_{1}), \bigwedge_2^1 R_y(\theta_{1,2})\right] &\neq 0. \nonumber
\end{align}

Furthermore, rotations around different axes acting on the same target qubit also do not commute 

\begin{equation}
    \left[\bigwedge_i^J R_z(\theta_{j,i}), \bigwedge_i^{J'} R_y(\theta_{j',i})\right] \neq 0.
\end{equation}

These commutation properties impose a strict hierarchical execution order in our Normal Form to avoid algebraic mixing of the amplitudes.
Therefore, all $R_y$ operations must be swept from the first qubit to the last (acting as targets before they act as controls), followed independently by the $R_z$ phase operations.

\begin{theorem}\label{th_3qubits}
    Given a general quantum system of $3$ qubits, there exists a unique normal form to represent its pure states by means of $7$ Bloch spheres, given by equation \ref{eq3esf_good}.

\end{theorem}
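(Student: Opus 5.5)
Mirroring the proof of Theorem \ref{th1}, I will split the problem into phases and moduli. Write an arbitrary 3-qubit state as $x_{xyz}=r_{xyz}e^{i\psi_{xyz}}$ with $r_{xyz}\ge 0$. Multiplying by a global phase, I normalize so that $\psi_{000}=0$. The plan has two steps. First, I verify that the circuit in Eq. \eqref{eq3esf_good} really produces the amplitudes $A_{xyz}$ and phases $\Phi_{xyz}$ of Tables \ref{tab:coef} and \ref{tab:fases}. Then I show that the map from the 14 angles to these data can be inverted, with canonical choices at the degenerate points.

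For the first step, the phase gates are diagonal, so by Eq. \eqref{eq:commuteZ} they simply multiply $\ket{xyz}$ by $e^{i\Phi_{xyz}}$. The amplitude part I handle by induction on the qubit index. After the three uncontrolled $R_y$ gates, the state is the product state with the half-angles.

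The remaining gates act in turn. The gate $\bigwedge_2^1R_y(\theta_{1,2})$ adds $\theta_{1,2}$ to the argument of qubit $2$ on the branch where $x=1$. All the controlled $R_y$ gates targeting qubit $3$ are then applied on each fixed branch $(x,y)$. Two facts make this work:
\begin{itemize}
\item the product of $R_y$ rotations is $R_y$ of the sum of the angles;
\item these gates commute with one another, since they share a target and no gate's target lies in another's control set.
\end{itemize}
So on branch $(x,y)$ the argument of qubit $3$ becomes $\theta_3/2$ plus the sum of all $\theta_{J,3}$ with $J\subseteq\{i: \text{bit } i=1\}$. This matches Table \ref{tab:coef}. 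It is also where the ordering prescribed before the theorem (targets before controls) is used.

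For the second step, the phases come first. Table \ref{tab:fases} is a unitriangular system in the subset order: each $\Phi_{xyz}$ contains exactly one new angle, namely the one indexed by the full support of $xyz$. I therefore solve by Möbius inversion, i.e. $\varphi_{J,k}=\sum_{S\subseteq J\cup\{k\}, k\in S}(-1)^{|J\cup\{k\}|-|S|}\psi_S$, taken modulo $2\pi$. As in Theorem \ref{th1}, a phase shift of $\pi$ absorbs any negative sign of a trigonometric factor.

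The moduli are peeled off qubit by qubit, as conditional probabilities:
\begin{itemize}
\item $\cos^2(\theta_1/2)=\sum_{yz}r_{0yz}^2$ determines $\theta_1$;
\item on the branch $x=0$, $\cos^2(\theta_2/2)=\sum_z r_{00z}^2/\sum_{yz}r_{0yz}^2$ determines $\theta_2$;
\item on the branch $x=1$, the corresponding ratio determines $\theta_2/2+\theta_{1,2}$, and hence $\theta_{1,2}$.
\end{itemize}
Qubit 3 works the same way. Its four branches $(x,y)$ each fix one combined angle, namely $\theta_3/2$, $\theta_3/2+\theta_{2,3}$, $\theta_3/2+\theta_{1,3}$ and $\theta_3/2+\theta_{1,3}+\theta_{2,3}+\theta_{\{1,2\},3}$. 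This is again triangular, and I solve it successively for $\theta_3,\theta_{2,3},\theta_{1,3},\theta_{\{1,2\},3}$. The count checks out: 7 moduli angles plus 7 phases gives 14 parameters.

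The main obstacle is uniqueness. It fails exactly when some branch has zero weight, in which case the conditional angles there are undefined, and the phases of vanishing amplitudes are undefined as well. I would fix it as Theorem \ref{th1} does. Whenever a branch probability vanishes, the angles conditioned on it are set to canonical values such as $0$, and likewise for the phases of zero amplitudes. I would also restrict each combined angle to a fixed fundamental interval, so that it is recovered uniquely from its $\cos^2$ together with the sign/phase convention. Uniqueness is then claimed with respect to these conventions.
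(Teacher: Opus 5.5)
Your skeleton matches the paper's. Its Parts 1--3 also recover the latitudinal angles by triangular back-substitution over the branches, then invert the unitriangular system of Table~\ref{tab:fases}, with canonical values at degenerate points.

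Where you deviate, you are more careful than the paper:
\begin{itemize}
\item The paper obtains $\theta_1,\theta_2$ in Part 1 from $A_{000},A_{001},A_{010}$ under a separability assumption, and reuses them in Part 2. But by Table~\ref{tab:coef}, $A_{010}=\cos\frac{\theta_1}{2}\sin\frac{\theta_2}{2}\cos(\frac{\theta_3}{2}+\theta_{2,3})$ depends on $\theta_{2,3}$, so those formulas fail for entangled states. Your marginal $\cos^2\frac{\theta_1}{2}=\sum_{yz}r_{0yz}^2$ and the conditional on the branch $x=0$ are the correct general formulas. The telescoping product of conditionals then gives $|A_{xyz}|=r_{xyz}$ at once.
\item You verify the forward map, which the paper only asserts.
\item You rightly make uniqueness convention-dependent. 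With $\theta_{J,k}\in[0,\pi]$ and signs absorbed by $\pi$ phase shifts, as in the paper, each combined angle such as $\gamma=\frac{\theta_3}{2}+\theta_{2,3}$ sweeps an interval of length $\pi$. That interval generically contains two values of $\gamma$ with the same $(|\cos\gamma|,|\sin\gamma|)$ but different $\gamma\bmod\pi$. So the paper's map is generically two-to-one there, and its uniqueness claim silently fails. Restricting the combined angles to $[0,\pi/2]$ repairs this, at the cost that the $\theta_{J,k}$ no longer range over a fixed $[0,\pi]$. You should state this caveat explicitly, since it modifies the theorem as written.
\end{itemize}

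The one real error is your explicit phase formula. Table~\ref{tab:fases} reads $\Phi_b=\sum_{\emptyset\neq T\subseteq S(b)}\varphi_T$, where $S(b)$ is the support of $b$ and $\varphi_T:=\varphi_{T\setminus\{\max T\},\max T}$. Möbius inversion must therefore run over \emph{all} $S\subseteq J\cup\{k\}$: $\varphi_{J,k}=\sum_{S\subseteq J\cup\{k\}}(-1)^{|J|+1-|S|}\psi_S$, with $\psi_\emptyset=0$. Restricting to $S\ni k$ drops terms. For example, your formula gives $\varphi_{2,3}=\psi_{011}-\psi_{001}$, whereas the table forces $\varphi_{2,3}=\Phi_{011}-\Phi_{010}-\Phi_{001}$. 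Your argument only needs unitriangularity, so this is easily fixed.

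Also, the phases must be solved after the moduli, not first. The target phases are $\psi_{xyz}+\pi\epsilon_{xyz}$, where $\epsilon_{xyz}\in\{0,1\}$ records the sign of the trigonometric product fixed by the chosen latitudinal angles. Under the $[0,\pi/2]$ convention all $\epsilon_{xyz}=0$, so the issue disappears.
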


\begin{proof}
    
We will proceed analogously to the proof of the 2-qubit case.

\bigskip
\noindent {\bf Part 1.} Let us assume that the amplitudes of our state vector are real numbers and we do not consider the phases. We have 8 real positive values such that $\sum_{x\in\{0,1\}^3} A_{x}^2 = 1$. In this initial stage, we further assume that there is no entanglement, meaning the state is strictly separable and we only need three free parameters i.e. the single-qubit sphere angles $\{ \theta_1, \theta_2, \theta_3 \}$.

Following the logic established for the two-qubit system, we focus on the first three amplitudes to demonstrate the bijection for the separable components. The function connecting the Bloch sphere angles to these state vector amplitudes is:

\begin{equation*}
    f : [0, \pi]^3  \longrightarrow [0,1]^3
\end{equation*}

\begin{align*}
    f(\theta_1,\theta_2,\theta_3) &= (A_{000},A_{001},A_{010}) \\
    &= ( \cos(\theta_1/2) \cos(\theta_2/2) \cos(\theta_3/2) , \\
    &\qquad \cos(\theta_1/2) \cos(\theta_2/2) \sin(\theta_3/2) , \\
    &\qquad \cos(\theta_1/2) \sin(\theta_2/2) \cos(\theta_3/2) ).
\end{align*}

To verify that this function is bijective (excluding some points such as $\theta_1 = \pi$ or $\theta_2 = \pi$ that we will discuss later), we will show that we can uniquely recover the three angles from the three given amplitudes by reversing the trigonometric projections. We can find the angles sequentially as follows:

First, assuming $A_{000} \neq 0$ (the degenerate cases where a denominator is zero will be addressed at the end of this section), we divide $A_{001}$ by $A_{000}$
\begin{equation}
    \frac{A_{001}}{A_{000}} = \tan \frac{\theta_3}{2} \implies \theta_3 = 2 \arctan \left( \frac{A_{001}}{A_{000}} \right).
\end{equation}

Once $\theta_3$ is uniquely determined, we can isolate $\theta_2$ by dividing $A_{010}$ by $A_{000}$ and substituting the known $\theta_3$:
\begin{small}
\begin{equation}
    \frac{A_{010}}{A_{000}} = \frac{\tan(\frac{\theta_2}{2})}{\cos(\frac{\theta_3}{2})} \implies \theta_2 = 2 \arctan \left( \frac{A_{010} \cos(\frac{\theta_3}{2})}{A_{000}} \right).
\end{equation}
\end{small}

Finally, with $\theta_2$ and $\theta_3$ known, $\theta_1$ is trivially determined from the first amplitude:
\begin{equation}
    \theta_1 = 2 \arccos \left( \frac{A_{000}}{\cos \frac{\theta_2}{2} \cos \frac{\theta_3}{2}} \right).
\end{equation}

In cases where some amplitude is canceled (as in $\theta_1 = \pi$ or $\theta_2 = \pi$) and the value of other angles can be arbitrary, we will always take $\theta_i=0$ as canonical. This inverse mapping proves that any valid set of these three separable amplitudes corresponds to one, and only one, set of single-qubit angles $(\theta_1, \theta_2, \theta_3)$, thereby establishing the bijection for the non-entangled real subspace.

\medskip
\noindent {\bf Part 2.} We now introduce the bipartite and tripartite entanglement to our model, still assuming real amplitudes (ignoring phases). This entanglement is governed by the four remaining latitudinal parameters: $\{\theta_{1,2}, \theta_{1,3}, \theta_{2,3}, \theta_{\{1,2\},3}\}$. These angles control the redistribution of the probability weights among the remaining basis states via hierarchical controlled $Y$-rotations. Now that we no longer have a separable state, but a general one with entanglement, we need $2^3-1=7$ free parameters, which coincides with the number of angles we have.

Our goal is to show that we can uniquely determine these four entanglement angles from the remaining amplitudes, establishing the complete bijection for the moduli. Since the operations are applied in a strict hierarchical order, the dependence of the amplitudes on the angles forms a triangular structure, allowing us to solve them sequentially.

First, let us examine the amplitudes $A_{010}$ and $A_{011}$ from Table \ref{tab:coef}. They share the common factor $\cos(\theta_1/2)\sin(\theta_2/2)$. Assuming non-zero denominators, their ratio is independent of $\theta_1$ and $\theta_2$, depending only on the single-qubit angle $\theta_3$ and on the bipartite entanglement angle $\theta_{2,3}$,
\begin{equation}
    \frac{A_{011}}{A_{010}} = \tan \left( \frac{\theta_3}{2} + \theta_{2,3} \right).
\end{equation}
Operating, we obtain
\begin{equation}
    \theta_{2,3} = \arctan \left( \frac{A_{011}}{A_{010}} \right) - \frac{\theta_3}{2}.
\end{equation}
Since $\theta_3$ was already uniquely determined in Part 1, $\theta_{2,3}$ is completely fixed (note that $\theta_3$ was derived strictly from the ratio between $A_{000}$ and $A_{001}$ and therefore, it remains independent of the latitudinal angles of the entanglement spheres). As $\theta_{2,3} \in [0, \pi]$, the argument of the tangent sweeps the interval $[0,3\pi/2]$, the ratio can be negative. We can take care of this sign issue in the same way that we did with 2-qubits, with the phase.

Next, we address the bipartite entanglement controlled by the first qubit, $\theta_{1,2}$. To isolate it, we group the terms where the first qubit is in state $\ket{1}$:
\begin{align}
   A_{100}^2 + A_{101}^2 =  \sin^2 \frac{\theta_1}{2} \cos^2 \left( \frac{\theta_2}{2} + \theta_{1,2} \right) \nonumber \\
   A_{110}^2 + A_{111}^2 =  \sin^2 \frac{\theta_1}{2} \sin^2 \left( \frac{\theta_2}{2} + \theta_{1,2} \right).
\end{align}
The ratio isolates $\theta_{1,2}$:
\begin{equation}
    \frac{A_{100}^2 + A_{101}^2}{A_{110}^2 + A_{111}^2} =  \tan^2 \left( \frac{\theta_2}{2} + \theta_{1,2} \right) .
\end{equation}
With $\theta_2$ known from Part 1, we can uniquely solve for $\theta_{1,2}$. Again, the combined argument spans up to $3\pi/2$, making the tangent negative at some points. This would mean that some of the amplitudes are actually imaginary, which seems to contradict our initial assumption. However, as with the signs, we can convert them to real values if needed, along with the phases.

Taking the ratio of $A_{101}$ and $A_{100}$,
\begin{small}
\begin{equation}
    \frac{A_{101}}{A_{100}} = \tan \left( \frac{\theta_3}{2} + \theta_{1,3} \right), 
\end{equation}
\end{small}
we obtain $\theta_{1,3}$:
\begin{equation}
    \theta_{1,3} = \arctan \left( \frac{A_{101}}{A_{100}} \right) - \frac{\theta_3}{2}.
\end{equation}

Finally, the tripartite entanglement angle $\theta_{\{1,2\},3}$ appears only in the amplitudes $A_{110}$ and $A_{111}$. Their ratio encompasses all previously calculated angles:
\begin{equation}
    \frac{A_{111}}{A_{110}} = \tan \left( \frac{\theta_3}{2} + \theta_{1,3} + \theta_{2,3} + \theta_{\{1,2\},3} \right).
\end{equation}
Since all terms in the argument are already uniquely known except $\theta_{\{1,2\},3}$, we can trivially isolate it. 

In the last two calculations we have the same problem with the sign of the amplitudes which, as in all previous cases, can be solved by correctly adjusting the phases. This systematic unraveling mathematically proves that any arbitrary set of real, normalized 3-qubit amplitudes corresponds bijectively to a unique configuration of the 7 latitudinal angles $(\theta_1, \theta_2, \theta_3, \theta_{1,2}, \theta_{1,3}, \theta_{2,3}, \theta_{\{1,2\},3})$.

\medskip
\noindent{\bf Part 3.} Finally, we must demonstrate the bijection for the phase angles. Now our amplitudes are complex and can be expressed as $A_{xyz}e^{i\Phi_{xyz}}$, where $A_{xyz}$ is real and positive and $\Phi_{xyz}\in[0,2\pi)$. In parts 1 and 2 we saw that when recovering the angles we could end up with negative or even imaginary amplitude in a specific $\ket{xyz}$. Now we can fix that problem with a phase change in the corresponding $\Phi_{xyz}$.

The global phase is factored out such that the canonical reference phase is $\Phi_{000}$, leaving us with 7 independent relative phases that correspond in number to our 7 phase angles from our sphere model $\{ \varphi_1, \varphi_2, \varphi_3, \varphi_{1,2}, \varphi_{1,3}, \varphi_{2,3}, \varphi_{\{1,2\},3} \}$. It is easy to see that the matrix relating the sphere's angles $\varphi$ to the relative phases $\Phi$ is triangular with a non-zero diagonal. Since its determinant is non-zero, the system of equations is consistent and determinate. That is, any arbitrary phase distribution can be uniquely constructed. We compute the resulting phase values modulo $2\pi$ to ensure they remain within the $[0, 2\pi)$ interval. This step completes the proof, confirming the full bijective mapping between the 14 real parameters of the general 3-qubit Hilbert space and the 7 Bloch spheres.

\end{proof}

\subsection{Modelling the N-qubit system.}
Let $N$ be the number of qubits of our system and let $a$ be either the symbol $y$ or $z$. For any target qubit $i\in \{2,\dots,N\}$ and any number of control qubits $0\le k\le i-1$, we define $\mathcal{J}_{k,i}$ as the ordered sequence of all subsets of $\{1,\dots,i-1\}$ that have cardinality $k$. The number of elements of this sequence is $\binom{i-1}{k}$. We denote the $h$-th element of this sequence as $J^{(k)}_h$, so $\mathcal{J}_{k,i} = (J^{(k)}_1, J^{(k)}_2, \dots, J^{(k)}_M)$.

It follows that, if $k=0$, the sequence contains only one element, the empty set, thus $\mathcal{J}_{0,i}=(\emptyset)$ and, in this case, we consider $i \in \{1,\dots,N\}$. The ordering of the sequence is such that if $1 \leq u < v \leq \binom {i-1}{k}$ then $J^{(k)}_u$ precedes $J^{(k)}_v$ lexicographically. We define

\begin{equation}
     U_{k,i}^{a} = \prod_{h=1}^{\binom{i-1}{k}} \bigwedge_{i}^{J^{(k)}_h} R_a(\alpha_{J^{(k)}_h, i}),
    \label{Eq:U^a_k,i}
\end{equation}

where $\alpha_{J,i}$ is the corresponding rotation angle parameter, defined for each axis as follows:
\begin{small}
\begin{equation*}
    \text{For } a = y:  \alpha_{J^{(k)}_h,i} = \begin{cases} \theta_i/2 & \text{if } J^{(k)}_h = \emptyset \quad (k=0) \\ \theta_{J^{(k)}_h,i} & \text{if } J^{(k)}_h \neq \emptyset \quad (k \geq 1) \end{cases}.
\end{equation*}
\end{small}

\begin{small}
\begin{equation*}
    \text{For } a = z:  \alpha_{J^{(k)}_h,i} = \begin{cases} \varphi_i & \text{if } J^{(k)}_h = \emptyset \quad (k=0) \\ \varphi_{J^{(k)}_h,i} & \text{if } J^{(k)}_h \neq \emptyset \quad (k \geq 1) \end{cases}.
\end{equation*}
\end{small}

It is important to note that, in eq. \eqref{Eq:U^a_k,i},  the product symbol $\prod$ respects the ordered nature of the sequence $\mathcal{J}_{k,i}$ from left to right.

With these unitary operators defined, we can now establish the general expression that encodes the entire hierarchical structure for an arbitrary number of qubits.

\begin{theorem}\label{th_nqubits_3}
    Given a general quantum system of $N$ qubits, there exists a unique normal form to represent its pure states by means of $2^N-1$ Bloch spheres, given by the expression:

\begin{align}
    \ket{Q_1 \dots Q_N} &= \left(\prod^2_{i=N}\prod^1_{k=i-1}U^z_{k,i}\right)\left(\prod_{i=1}^NU^z_{0,i}\right) \nonumber \\
    &\quad \cdot \left(\prod^2_{i=N}\prod^1_{k=i-1}U^y_{k,i}\right)\left(\prod_{i=1}^NU^y_{0,i}\right)\cdot\ket{0\dots0}.
    \label{eq:GeneralExpression}
\end{align}
\end{theorem}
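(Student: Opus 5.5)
We will prove the statement by induction on $N$, using Theorem \ref{th1} ($N=2$) and Theorem \ref{th_3qubits} ($N=3$) as base cases. The proof mirrors the three-part structure used there: first compute the amplitudes produced by \eqref{eq:GeneralExpression}, then invert the moduli, then invert the phases.

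\textbf{Step 1: closed form of the amplitudes.} We first establish an explicit closed form for the output of \eqref{eq:GeneralExpression}. For a basis word $x=x_1\dots x_N$, let $S(x)=\{j : x_j=1\}$ and $S_{<i}(x)=S(x)\cap\{1,\dots,i-1\}$. We claim that the coefficient of $\ket{x}$ is
\[
\prod_{i=1}^N g_{x_i}\Big(\tfrac{\theta_i}{2}+\sum_{\emptyset\neq J\subseteq S_{<i}(x)}\theta_{J,i}\Big)\cdot \exp\Big(i\sum_{i\in S(x)}\sum_{J\subseteq S_{<i}(x)}\varphi_{J,i}\Big),
\]
where $g_0=\cos$, $g_1=\sin$, and $\varphi_{\emptyset,i}=\varphi_i$. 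This is exactly the pattern read off Tables \ref{tab:coef} and \ref{tab:fases}. The phase factor follows at once because all $R_z$ gates are diagonal and commute \eqref{eq:commuteZ}: $\bigwedge_i^J R_z$ multiplies $\ket{x}$ by $e^{i\varphi_{J,i}}$ precisely when $J\cup\{i\}\subseteq S(x)$.

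For the moduli we apply the $Y$-part qubit by qubit, using the ordering $i=N,\dots,2$ from right to left, so that qubit $i$'s gates act after those of qubits $<i$. The key observations are these:
\begin{itemize}
\item All gates targeting qubit $i$ commute with each other, since they are rotations about the same axis on the same target, conditioned on the computational-basis values of earlier qubits.
\item Those earlier qubits are never modified again.
\item On a branch with fixed values $x_1\dots x_{i-1}$, the controlled rotations whose controls are satisfied compose into a single $R_y$ whose angle is the sum of their angles.
\end{itemize}
Acting on $\cos(\theta_i/2)\ket0+\sin(\theta_i/2)\ket1$, this composite rotation gives the stated $g_{x_i}$ factor. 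An induction on $i$, using the tensor-product structure in the last qubit, makes this rigorous.

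\textbf{Step 2: inverting the moduli.} Here we mimic Part 2 of Theorem \ref{th_3qubits} by peeling off one qubit at a time. Group the amplitudes according to the prefix $p=x_1\dots x_{i-1}$ and compare the total weights of $p0\ast$ and $p1\ast$. Summing the squares over the suffix, which is legitimate because the factors for later qubits form a normalized distribution, gives
\[
\frac{\sum_{y}|A_{p1y}|^2}{\sum_y|A_{p0y}|^2}=\tan^2\Big(\tfrac{\theta_i}{2}+\sum_{\emptyset\neq J\subseteq S(p)}\theta_{J,i}\Big).
\]
The sets $S(p)$ range over all subsets of $\{1,\dots,i-1\}$. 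The empty prefix pattern determines $\theta_i$. Each nonempty $S(p)$ then determines $\theta_{S(p),i}$ once all $\theta_{J,i}$ with $J\subsetneq S(p)$ are known, so we proceed by induction on $|S(p)|$, which is a Möbius-inversion (triangular) structure. Counting parameters, there are $\sum_i 2^{i-1}=2^N-1$ latitudinal angles, matching the $2^N-1$ free moduli.

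\textbf{Step 3: inverting the phases.} The phase map $\varphi\mapsto\Phi$ is linear and, ordered by $S(x)$, is unitriangular: $\Phi_x$ equals $\varphi_{S(x)\setminus\{\max S(x)\},\max S(x)}$ plus terms indexed by smaller sets. It is therefore invertible, by Möbius inversion on the Boolean lattice, modulo $2\pi$.

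\textbf{Main obstacle.} The main obstacle is uniqueness and range. Sums of angles lie in $[0,\pi/2+(2^{i-1}-1)\pi]$, so the $\tan^2$ inversion is not single-valued. Signs of cosines and sines must also be absorbed into $\pi$ shifts of the phases, which then couple back into Step 3. Finally, degenerate branches with zero weight leave angles free. To handle this, we will impose canonical choices: each $\theta_{J,i}$ is the unique value making the full argument lie in $[0,\pi/2]$ relative to the previously fixed partial sum, reduced into $[0,\pi]$, and undetermined angles are set to $0$. We then check that this convention makes the map a bijection onto normalized states modulo global phase. We expect this bookkeeping, rather than the algebra, to require the most care.
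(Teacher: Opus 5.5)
Your argument is correct, but it runs in the opposite direction from the paper's.

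\textbf{The paper's route.} It never writes down the amplitudes for general $N$. Instead it uncomputes: it applies the inverse of the block of gates targeting qubit $N$ and reads off each branch angle as $\Theta_x=\arctan(A_{x1}/A_{x0})$. It then solves the subset-sum system by cardinality and recurses on the reduced $(N-1)$-qubit state. The main text recurses down to Theorem~\ref{th_3qubits}. The appendix recurses down to $N=1$, after first removing the phases by a sequential $R_z$ elimination, which is your M\"obius inversion in algorithmic form.

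\textbf{Your route.} You prove the product formula for all $2^N$ coefficients and read each block of angles directly from the marginal weights of qubits $1,\dots,i$. The two computations agree: after peeling qubits $N,\dots,i+1$, the squared amplitudes of the paper's reduced state are exactly your $\sum_y|A_{px_iy}|^2$.

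\textbf{What each buys.} The paper's route is literally the decoding circuit of Figure~\ref{fig:state_decodification}, and it only ever needs the local action of one block. Yours gives explicit, non-recursive formulas for every angle, whereas the paper only tabulates the pattern for $N=3$. Yours also never touches the 3-qubit theorem. That is worth keeping, because Part~1 of that proof recovers $\theta_2$ from $A_{010}/A_{000}$, a ratio that also involves $\theta_{2,3}$; your summed weights are the correct replacement. So drop the induction-on-$N$ framing, which Steps~1--3 never use.

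\textbf{Ranges and uniqueness.} Your handling here is sharper than the paper's. The paper's recipe $\theta_{J,N}=\arctan(A_{x1}/A_{x0})-\cdots$ can return negative angles. Worse, on the full box $\theta_{J,i}\in[0,\pi]$ the parameterization is not injective even away from the poles. For two qubits with $\theta_2/2<\Theta<\pi/2$, the choices $\theta_{1,2}=\Theta-\theta_2/2$ and $\theta_{1,2}'=\pi-\Theta-\theta_2/2$ give the same state once $\varphi_1$ and $\varphi_{1,2}$ are both shifted by $\pi$. So uniqueness genuinely rests on a convention like yours.

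When you write the convention out, make two fixes. Reduce into the half-open interval $[0,\pi)$, since $[0,\pi]$ leaves $0$ and $\pi$ both available. Also fix the $\varphi$'s attached to zero-amplitude components, not only the $\theta$'s on zero-weight branches. Injectivity then follows directly:
\begin{enumerate}
\item the moduli fix each $\Theta_{p,i}$ modulo $\pi$;
\item the triangular system then fixes each $\theta_{J,i}$ modulo $\pi$;
\item the unitriangular phase map fixes the remaining $\varphi$'s.
\end{enumerate}
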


A rigorous mathematical demonstration of this theorem, based on an inductive proof using algorithmic state transformations, is detailed in the appendix. Below, we provide a constructive physical intuition relying on the interpretation of our model as controlled logic gates.

\begin{proof}
    To prove this, we will rely on the interpretation of our model as controlled logic gates. As can be seen in Figure \ref{fig:state_creation}, by following equation \eqref{eq:GeneralExpression}, we obtain a circuit where adding a new qubit entails implementing a block of controlled rotations on this qubit.

\begin{figure*}[tbp] % El asterisco (*) hace que la figura ocupe ambas columnas
    \centering
    
    % --- Subfigura 1: Creación ---
    \begin{subfigure}{\textwidth}
        \centering
        \resizebox{0.9\textwidth}{!}{
        \begin{quantikz}[row sep={0.5cm, between origins}, column sep=0.1cm, wire types={q,q,q,b,q,q}, classical gap=0.05cm]
            \lstick{\rlap{$\ket{0}_1$}\phantom{$\ket{0}_{N-1}$}} & \gate{\theta_1}\gategroup[2,steps=3,style={dashed,rounded corners,fill=blue!20, inner xsep=2pt},background,label style={anchor=north,yshift=0.3cm}]{{\sc 2-qubit system}} && \ctrl{1} &&&&&\gategroup[3,steps=4,style={dashed,rounded corners,fill=red!20, inner xsep=2pt},background,label style={anchor=north,yshift=0.3cm}]{{\sc 3rd qubit}}& \ctrl{2}& &\ctrl{2} & \ \ldots\ &\gategroup[5,steps=7,style={dashed,rounded corners,fill=orange!20, inner xsep=2pt},background,label style={anchor=north,yshift=0.3cm}]{{\sc Nth qubit}}& \ctrl{4} &&&   \ \ldots\ & & \ctrl{4} & \\
            \lstick{\rlap{$\ket{0}_2$}\phantom{$\ket{0}_{N-1}$}} && \gate{\theta_2} & \gate{\theta_{1,2}} &&&&&&& \ctrl{1} & \ctrl{1} &\ \ldots\ &&& \ctrl{3} &&  \ \ldots\  & \ctrl{3}& \ctrl{3} & \\
            \lstick{\rlap{$\ket{0}_3$}\phantom{$\ket{0}_{N-1}$}} &&&&&&&& \gate{\theta_3} &\gate{\theta_{1,3}}& \gate{\theta_{2,3}} & \gate{\theta_{\{1,2\},3}} &\ \ldots\ &&&& \ctrl{2} & \ \ldots\  & \ctrl{2}& \ctrl{2} & \\[0.4cm]
            \lstick{\rlap{\raisebox{0.1cm}{\hspace{0.25cm}$\vdots$}}\phantom{$\ket{0}_{N-1}$}} &&&&&&&&&&&& \ \ldots\ \ &&&&& \ \ldots\ \ &\ctrl{1}&\ctrl{1}& \\[0.4cm]
            \lstick{\rlap{$\ket{0}_N$}\phantom{$\ket{0}_{N-1}$}} &&&&&&&&&&&& \ \ldots\ & \gate{\theta_N}& \gate{\theta_{1,N}} & \gate{\theta_{2,N}} & \gate{\theta_{3,N}} &  \ \ldots\  & \gate{\theta_{\{2\dots N-1\},N}} & \gate{\theta_{\{1\dots N-1\},N}}
        \end{quantikz}
        }
        \caption{State creation circuit.}
        \label{fig:state_creation}
    \end{subfigure}
    
    \vspace{1cm} % Espacio de separación entre los dos circuitos
    
    % --- Subfigura 2: Decodificación ---
    \begin{subfigure}{\textwidth}
        \centering
        \resizebox{0.9\textwidth}{!}{
        \begin{quantikz}[row sep={0.5cm, between origins}, column sep=0.1cm, wire types={q,q,q,b,q,q}, classical gap=0.05cm]
            \gate[5][0.5cm]{\rotatebox{90}{$\ket{Q_1 \dots Q_N}$}} &&& \ctrl{4}\gategroup[5,steps=7,style={dashed,rounded corners,fill=orange!20, inner xsep=2pt},background,label style={anchor=north,yshift=0.3cm}]{{\sc Nth qubit}} & & \ \ldots\ & & & \ctrl{4} & & \ \ldots\ & \ctrl{2}\gategroup[3,steps=4,style={dashed,rounded corners,fill=red!20, inner xsep=2pt},background,label style={anchor=north,yshift=0.3cm}]{{\sc 3rd qubit}} & & \ctrl{2} & & & & & & \ctrl{1}\gategroup[2,steps=3,style={dashed,rounded corners,fill=blue!20, inner xsep=2pt},background,label style={anchor=north,yshift=0.3cm}]{{\sc 2-qubit system}} & & \gate{-\theta_1} & \rstick{$\ket{0}_1$} \\
            &&& \ctrl{3} & \ctrl{3} & \ \ldots\ & & \ctrl{3} & & & \ \ldots\ & \ctrl{1} & \ctrl{1} & & & & & & & \gate{-\theta_{1,2}} & \gate{-\theta_2} & & \rstick{$\ket{0}_2$} \\
            &&& \ctrl{2} & \ctrl{2} & \ \ldots\ & \ctrl{2} & & & & \ \ldots\ & \gate{-\theta_{\{1,2\},3}} & \gate{-\theta_{2,3}} & \gate{-\theta_{1,3}} & \gate{-\theta_3} & & & & & & & & \rstick{$\ket{0}_3$} \\[0.4cm]
            &&& \ctrl{1}&\ctrl{1} & \ \ldots\ \ & & & & & \ \ldots\ \ & & & & & & & & & & & & \rstick{\raisebox{0.1cm}{\hspace{0.25cm}$\vdots$}} \\[0.4cm]
            &&& \gate{-\theta_{\{1\dots N-1\},N}} & \gate{-\theta_{\{2\dots N-1\},N}} & \ \ldots\ & \gate{-\theta_{3,N}} & \gate{-\theta_{2,N}} & \gate{-\theta_{1,N}} & \gate{-\theta_N} & \ \ldots\ & & & & & & & & & & & & \rstick{$\ket{0}_N$}
        \end{quantikz}
        }
        \caption{State decodification circuit.}
        \label{fig:state_decodification}
    \end{subfigure}
    
    \caption{Hierarchical representation of the model translated into controlled quantum logic gates for an arbitrary $N$-qubit system.The blocks labeled with $\theta$ parameters represent the corresponding rotation gates. For diagrammatic simplicity, we omitted that for single-qubit $Y$-rotations their angles have a $1/2$ factor, whereas the multipartite entanglement rotation angles remain unscaled. Furthermore, the complete state generation process involves applying this entire circuit structure twice, first using $Y$-rotations, followed by the same structure with $Z$-rotations.}
    \label{fig:combined_circuits}
\end{figure*}

Assuming this decomposition is correct and unique, from an arbitrary state $\ket{Q_1\dots Q_N}$, we can obtain the state $\ket{0\dots0}$ by undoing all the rotations one by one, that is, applying the gates in reverse order with negative angles, as shown in Figure \ref{fig:state_decodification}.

Based on this construction, we will demonstrate that our model is capable of uniquely generating any arbitrary state. To do this, we will apply the gates associated with the decoding of the $N$-th qubit (the orange block in Figure \ref{fig:state_decodification}) to our arbitrary state and extract the angles that cancel out all the state vector components where the $N$-th qubit is in the state $\ket{1}$. In this way, once we have applied the entire block in our decoding process, we will have reduced our problem of obtaining the parameters of an $N$-qubit state to an $(N-1)$-qubit problem. Using this methodology, we will eventually reach a 3-qubit system, for which we have already proven the validity of our model via Theorem \ref{th_3qubits}. Let us detail how this is performed specifically.
Let us consider the arbitrary state $\ket{Q_1\dots Q_N}=\sum_{x\in\{0,1\}^{N-1},y\in\{0,1\}}A_{xy}\ket{xy}e^{i\Phi_{xy}}$, where the binary string $x$ represents the state of the first $N-1$ qubits, $y\in\{0,1\}$, and $A_{xy}$ are the real amplitudes satisfying the normalization condition $\sum_{x\in\{0,1\}^{N-1},y\in\{0,1\}}A_{xy}^2=1$. In this first part, we will omit the phases and operate solely with the moduli and the Y-rotations. The state from which we must obtain the angles $\theta$ reduces to $\ket{Q_1\dots Q_N} = \sum_{x \in \{0,1\}^{N-1}, y \in \{0,1\}} A_{xy}\ket{xy}$.

We begin by applying the decoding block of the $N$-th qubit to our state. Since we are only applying the rotations associated with the $N$-th qubit, each of our gates acts on 2-dimensional subspaces defined by each string $x$. Therefore, in our resulting state, only the amplitudes with the same $x$ are mixed: $A_{x0}$ and $A_{x1}$. It can be easily verified that the resulting state is:

\small{
\begin{equation}
    \begin{pmatrix}
        \cos{(\Theta_{00\dots0})}A_{00\dots00}+\sin{(\Theta_{00\dots0})}A_{00\dots01}\\
        \cos{(\Theta_{00\dots0})}A_{00\dots01}-\sin{(\Theta_{00\dots0})}A_{00\dots00}\\
        \vdots\\
        \cos{(\Theta_x)}A_{x0}+\sin{(\Theta_x)}A_{x1}\\
        \cos{(\Theta_x)}A_{x1}-\sin{(\Theta_x)}A_{x0}\\
        \vdots\\
        \cos{(\Theta_{11\dots1})}A_{11\dots10}+\sin{(\Theta_{11\dots1})}A_{11\dots11}\\
        \cos{(\Theta_{11\dots1})}A_{11\dots11}-\sin{(\Theta_{11\dots1})}A_{11\dots10}\\
    \end{pmatrix},
\end{equation}
}

where we have defined the effective angle $\Theta_x$ that mixes the state vector components with the same $x$

\begin{equation}
    \Theta_x = \frac{\theta_N}{2} + \sum_{k=1}^{N-1} \sum_{h=1}^{\binom{N-1}{k}} \left( \prod_{j \in J^{(k)}_h} x_j \right) \theta_{J^{(k)}_h, N}.
    \label{eq:Eff_angle}
\end{equation}

If we equate to zero all the state vector components that have the $N$-th qubit in the state $\ket{1}$ (i.e., the new amplitudes $A'_{x1}$ of our vector), we find that the required effective angle is: $\Theta_x=\arctan(A_{x1}/A_{x0})$. Thanks to the structure of our equation \eqref{eq:Eff_angle}, which establishes a triangular relationship between the effective angles and the angles of our spheres, we can solve for them sequentially. To obtain all our parameters correctly, it is imperative that this resolution be done sequentially, as our effective angle combines the parameter of interest with multiple lower-order contributions. The order in which we obtain the angles of our model must be progressive, starting with the single-qubit angle and advancing according to the cardinality $k$ of the control subsets.

\begin{small}
\begin{align*}
    \theta_N &= 2\arctan\left(\frac{A_{00\dots01}}{A_{00\dots00}}\right) \\
    \theta_{J^{(1)}_h, N} &= \arctan\left(\frac{A_{x1}}{A_{x0}}\right) - \frac{\theta_N}{2} \\
    &\qquad \text{with } x_j = 1 \iff j \in J^{(1)}_h \\
    \theta_{J^{(2)}_h, N} &= \arctan\left(\frac{A_{x1}}{A_{x0}}\right) - \frac{\theta_N}{2} - \sum_{l: J^{(1)}_l \subset J^{(2)}_h} \theta_{J^{(1)}_l, N} \\
    &\qquad \text{with } x_j = 1 \iff j \in J^{(2)}_h \\
    &\vdots \\
    \theta_{J^{(k)}_h, N} &= \arctan\left(\frac{A_{x1}}{A_{x0}}\right) - \frac{\theta_N}{2} - \sum_{m=1}^{k-1} \sum_{l: J^{(m)}_l \subset J^{(k)}_h} \theta_{J^{(m)}_l, N} \\
    &\qquad \text{with } x_j = 1 \iff j \in J^{(k)}_h.
\end{align*}
\end{small}

To formalize this correctly without dragging along unknown terms, we have applied the condition $x_j = 1 \iff j \in J^{(k)}_h$. This allows us to activate only the bits in our string that correspond to the control qubits integrating the analyzed subset. Furthermore, since the summation subtracted from the arctangent is restricted to iterating from order $1$ to $k-1$, we ensure that all variables originate from previously resolved stages.

In order to maintain the uniqueness of our representation in cases where divergences arise, we must establish certain values as canonical: if $A_{x0}=0$ and $A_{x1}\neq 0$, we will take the corresponding angle to be $\pi/2$. Conversely, if both amplitudes are zero ($A_{x0}=A_{x1}=0$), we will take $0$ as the canonical value for the corresponding angle. At first glance, the appearance of these divergences might seem like a mathematical artifact of our equations. However, they arise naturally due to our geometric model. These situations occur when our state is located at one of the poles of any of our entanglement spheres. Just as in the traditional Bloch sphere, where the azimuthal coordinate loses its definition at the poles, our entanglement parameter becomes indeterminate. Therefore, establishing canonical values for the cases in which we encounter divergences does not constitute an \textit{ad-hoc} fix, but rather a means of maintaining the rigor and standardization of our model by taking its topology into account.

Once the block corresponding to the $N$-th qubit has been applied, any state vector component with the $N$-th qubit in the state $\ket{1}$ will be zero, leaving our system in the state:

\begin{equation}
    \left(\sum_{x\in\{0,1\}^{N-1}}A'_x\ket{x}\right)\otimes \ket{0}.
\end{equation}

That is, we have reduced the dimensionality of our problem from $N$ to $N-1$. This same procedure allows us to continue reducing the dimensionality of our problem until we reach the 3-qubit case, which was proven in Theorem \ref{th_3qubits}.

To obtain the phase angles, the procedure is identical, applying the gates in reverse order and with negative angles. In this case, we do not need to worry about divergences when extracting the angles.
\end{proof}

\section{Discussion}

Throughout this paper, we have presented a novel geometric model, based on the extension of the traditional Bloch sphere, to represent multi-qubit systems. We began by demonstrating that the parameterization of two-qubit systems using the latitudinal and longitudinal angles of three Bloch spheres is unique. Of these three, two describe each qubit individually, while the third encapsulates all the information regarding their entanglement. Furthermore, we have shown that this parameterization can be easily translated into a quantum circuit by applying uncontrolled rotations along the $Y$ and $Z$ axes for the angles of the ``qubit spheres,'' and controlled rotations along the same axes for the ``entanglement spheres.''

Building upon the two-qubit parameterization, we extended the model to 3-qubit systems, demonstrating once again that the correspondence between the state vector and our sphere model is bijective. We also established the foundational rules for the strict order in which the gates must be applied to achieve the required encoding. Due to the commutation relations of the different controlled gates, this order is not arbitrary. In the 3-qubit case, we have shown that 7 spheres are required in total: 3 ``qubit spheres'', 3 spheres for bipartite entanglement, and one final sphere encoding the tripartite entanglement.

Finally, relying on the hierarchy established for the 3-qubit case, we were able to generalize our model to $N$-qubit systems, proving that the representation uniquely maps the entire Hilbert space. While other geometric parameterizations found in the literature, such as the works by Wharton \cite{wharton} and Wie \cite{Wie}, add angles in an ad hoc manner or use non-unit spheres (in addition to only exploring the 2-qubit space), we have established a clear, well-structured geometric model capable of scaling to an arbitrary number of qubits without restrictions on entanglement.

From a computational science perspective, this parameterization offers a powerful alternative to the standard state vector, potentially reducing the computational cost of quantum simulations drastically when there is no entanglement between certain qubits. In the worst-case scenario, where all qubits are fully entangled, the number of required parameters equals that of the state vector. This provides a faster and more efficient method for storing and manipulating data, thereby increasing the capacity and speed of simulators that implement this parameterization. This advantage is particularly beneficial when simulating real quantum computers, where maintaining global entanglement across all qubits is currently practically impossible.

From a purely theoretical standpoint, this model lays the groundwork for a new framework to analyze quantum operations. Rather than being limited to representing elements of the state space, this approach allows quantum gates (elements of higher-order Lie groups such as SU(4), SU(8), and beyond) to be interpreted as geometric transformations and coupled rotations among these spheres. This perspective provides an intuitive and highly promising new avenue for studying quantum dynamics, multipartite entanglement, and circuit optimization.

\section{Acknowledgments}

This research received no specific grant from any funding agency in the public, commercial, or not-for-profit sectors.

\section{Author contributions}
All authors contributed equally to all aspects of this work, including the conceptualization, mathematical development, and writing of the manuscript. An AI language model was utilized exclusively to assist with LaTeX document formatting, typesetting, and resolving layout constraints. The AI was not used for the generation of scientific content, calculations, or theoretical derivations. All authors have reviewed and approved the final manuscript and assume full responsibility for its contents.

\section{Competing Interests}

The authors declare no competing interests.

\onecolumn
\appendix

\section{More formal proof of Theorem 3}

We will make use of the following notation. We will consider a $0 \leq x < 2^N$ sometimes as a number and sometimes as its binary representation; with $S(x)$ we will denote the subset of $\{1,2,\dots, N\}$ whose characteristic vector is $x$. For example, if $x =13$ then its binary representation is $x=(0,1,1,0,1)$ and  $S(x)=\{2,3,5\}$.  First we have the following Lemma.

 \begin{lemma} \label{le:subset}
    Let $\ket{\mathbf{Q}}$ be an $N$ qubits system, $m \in \{2,3,\dots, N\}$ and $J \subseteq \{1,2,$ $\dots, m-1\}$. 
    If $U$ is a single qubit gate then the components of $\bigwedge^{J}_m U \ket{\mathbf{Q}}$ with index $x$ such that $S(x) \nsubseteq J$ have the same values as the corresponding components of $\ket{\mathbf{Q}}$. 
 \end{lemma}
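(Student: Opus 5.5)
The plan is a direct computation of the action of $\bigwedge^{J}_m U$ on the computational basis, followed by a contrapositive: I will show that any component whose value changes has an index $x$ with $S(x)\subseteq J$, which is exactly the statement.

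First I fix the matrix of $\bigwedge^{J}_m U$, generalising the $4\times 4$ blocks in Eqs. \eqref{eq8} and \eqref{eq10}. Write $x = (x_1,\dots,x_N)$ and let $x^{(m\to b)}$ denote $x$ with its $m$-th bit replaced by $b\in\{0,1\}$. The gate is block diagonal with respect to the pairs $\{x^{(m\to 0)},x^{(m\to 1)}\}$. On a pair whose control bits are in the activating configuration it acts as the $2\times 2$ matrix $U$. On every other pair it acts as the identity. Hence the new component is
\[
(\bigwedge^{J}_m U\ket{\mathbf{Q}})_x = U_{x_m 0}\,Q_{x^{(m\to 0)}} + U_{x_m 1}\,Q_{x^{(m\to 1)}}
\]
when the pair is activated, and $Q_x$ otherwise. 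This step is purely bookkeeping, and for a product-form gate it follows from $\bigwedge^{\emptyset}_m U = I^{\otimes (m-1)}\otimes U\otimes I^{\otimes (N-m)}$ together with the projector decomposition of a controlled gate.

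Second, I argue by contraposition. Suppose the component with index $x$ changes. By the first step, its pair must be activated, and this gives a condition relating $S(x)\setminus\{m\}$ to $J$. I then need to turn that condition into $S(x)\subseteq J$, which would yield the lemma: any $x$ with $S(x)\nsubseteq J$ keeps its old value. To do this I will use the reading of the control condition under which the activating configuration is determined \emph{exactly} by $J$. That is, the qubits in $J$ are $1$ and the remaining qubits other than the target are $0$. This is the reading under which the coefficient tables (Tables \ref{tab:coef} and \ref{tab:fases}) attach $\theta_{J,i}$ only to the words whose preceding $1$-bits form $J$, combined with the triangular extraction in Eq. \eqref{eq:Eff_angle}.

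The main obstacle is precisely this second step. With the usual convention, in which a gate fires whenever every qubit of $J$ is $1$, the activated indices are $\{x : J\subseteq S(x)\setminus\{m\}\}$. That set contains words with $S(x)\nsubseteq J$; for example $J=\{1\}$, $m=2$, $x=(1,1)$ gives $S(x)=\{1,2\}\nsubseteq\{1\}$. So the inclusion only goes through if the convention is pinned down carefully. The first thing I would do is therefore check, against the $2$- and $3$-qubit computations already carried out, whether the index convention makes the activated set $\{x : S(x)\setminus\{m\}= J\}$.

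If it does, the contrapositive step is immediate. There is one remaining point: the target bit $m$ itself belongs to $S(x)$ when $x_m=1$, and $m\notin J$. I will handle this by reading $S(x)$ as the support on the control register $\{1,\dots,m-1\}$, which is how the lemma is used in the induction of the appendix. If the convention does not give the exact activated set, I would restrict the statement to the states arising in that induction and verify that the offending components are zero there.
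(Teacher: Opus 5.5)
Your first step is fine, and your counterexample ($J=\{1\}$, $m=2$, $x=(1,1)$) correctly shows that the lemma, read literally with $S(x)\nsubseteq J$, fails for the paper's gates. The gap is in how you repair it.

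\textbf{What the paper actually proves.} The paper's proof does not establish the printed inclusion. It observes that $\bigwedge^J_m U$ touches only components with $J\subseteq S(x)$, and concludes that components with $J\nsubseteq S(x)$ are unchanged. The inclusion in the statement is simply reversed. The reversed form is exactly what is used later: $J'=S(k)\nsubseteq S(h)$ in Lemma~\ref{le:less_thanZ}, and $J'\nsubseteq J$, i.e.\ $J'\nsubseteq S(2x+1)$, in the proof of Theorem~\ref{th_nqubits_3_methods}. The remark right after the lemma, that for $J=\emptyset$ the gate generally acts on every component, is also consistent only with the reversed form.

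\textbf{Why your fix does not work.} You propose an ``exact'' activation rule $S(x)\setminus\{m\}=J$. That is not the paper's convention, so the check you plan would come out negative. Table~\ref{tab:coef} puts $\theta_{1,3}$ and $\theta_{2,3}$ into the $\ket{110}$ and $\ket{111}$ amplitudes. Eq.~\eqref{eq:Eff_angle} weights $\theta_{J,N}$ by $\prod_{j\in J}x_j$, which equals $1$ whenever $J\subseteq S(x)$. And $\bigwedge^{\emptyset}_i U=I^{\otimes(i-1)}\otimes U\otimes\cdots$ acts on every pair, as your own first step uses. The other two escape routes fail as well. Restricting $S(x)$ to the control register does not rescue the printed version: $J=\{1\}$, $m=3$, $x=(1,1,0)$ is still modified. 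Restricting to the states met in the induction proves a different statement, not one about an arbitrary $\ket{\mathbf{Q}}$, which is how Lemma~\ref{le:less_thanZ} applies it.

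\textbf{The missing step.} What you need is the reversed inclusion, and your block description gives it in one line. If $J\nsubseteq S(x)$, choose $j\in J$ with $x_j=0$. Since $j\le m-1$, we have $j\ne m$, so the partner index $x^{(m\to 1-x_m)}$ also has $x_j=0$. The pair is therefore not activated, and the $x$-component equals $Q_x$. No special treatment of the target bit is needed, because $m\notin J$. This is the paper's argument.
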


\begin{proof}
    
 By the properties of tensor product and by the definition of controlled operator, the components with index $x$ of $\ket{\mathbf{Q}}$ affected by $\bigwedge^{J}_m U$ are only the ones in  which $J \subseteq S(x)$. Therefore all the components with index $x$ of $\bigwedge^{J}_m U \ket{\mathbf{Q}}$ for which $J \nsubseteq S(x)$ will have the same value of the corresponding components of $\ket{\mathbf{Q}}$.  
\end{proof}

Note that in Lemma \ref{le:subset} in the the case in which $J= \emptyset$  the controlled operator $\bigwedge^{\emptyset}_m U \ket{\mathbf{Q}}$ generally act on every component of $\ket{\mathbf{Q}}$

\begin{lemma} \label{le:less_thanZ}
    Let $\ket{\mathbf{Q}}$ be an $N$ qubits system. Let $m \in \{1,2, \dots, N\}$, $J' \subseteq \{1,2, \dots, m\}$ such that $m \in J'$ and $k$ such that $S(k)=J'$.  If  $J = J' \setminus \{m\}$ then all the components of vector $$\bigwedge^{J}_m R_z(\varphi_{J,m}) \ket{\mathbf{Q}},$$ with index less than $k$ are equal to the corresponding components of $\ket{\mathbf{Q}}$.
\end{lemma}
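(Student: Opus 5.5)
The plan is to use that $\bigwedge^{J}_m R_z(\varphi_{J,m})$ is diagonal in the computational basis, which turns the statement into a claim about which diagonal entries differ from $1$. By the definition of the controlled gate and the form of $R_z$ in Eq.~\eqref{eq7}, the operator leaves $\ket{x}$ unchanged unless all control qubits in $J$ are $1$. In that case it applies $R_z$ to qubit $m$, which multiplies $\ket{x}$ by $e^{i\varphi_{J,m}}$ exactly when $x_m=1$. So the diagonal entry at index $x$ is $e^{i\varphi_{J,m}}$ if $J\cup\{m\}=J'\subseteq S(x)$, and $1$ otherwise. Since the operator is diagonal, the component of $\bigwedge^{J}_m R_z(\varphi_{J,m})\ket{\mathbf{Q}}$ at index $x$ is the corresponding component of $\ket{\mathbf{Q}}$ times that entry, and no components are mixed. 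This is the diagonal analogue of Lemma~\ref{le:subset}: there the untouched indices were those with $J\nsubseteq S(x)$, and here, because $R_z$ fixes $\ket{0}$, the untouched indices are those with $J'\nsubseteq S(x)$.

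It then remains to show that every index $x$ with $J'\subseteq S(x)$ satisfies $x\ge k$ as integers. I will use the paper's identification of $x$ with its binary characteristic vector, with qubit $1$ as the most significant bit, so that $x=\sum_{j\in S(x)}2^{N-j}$. The map $S\mapsto\sum_{j\in S}2^{N-j}$ is monotone under inclusion because all summands are positive. Hence $J'\subseteq S(x)$ gives $x=\sum_{j\in S(x)}2^{N-j}\ge\sum_{j\in J'}2^{N-j}=k$. Taking the contrapositive, any index $x<k$ satisfies $J'\nsubseteq S(x)$, its diagonal entry is $1$, and its component is unchanged, which is the claim.

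The argument needs no hypothesis beyond $m\in J'$, which ensures $J=J'\setminus\{m\}\subseteq\{1,\dots,m-1\}$ so that the gate is well defined. The case $J=\emptyset$ fits the same argument: the gate is then an uncontrolled $R_z$ on qubit $m$, it acts nontrivially only where $x_m=1$, and $k=2^{N-m}$. The one step I expect to need care is the bit-ordering convention relating $S(x)$ to the numerical value of $x$; it must be fixed consistently with the paper's example $x=13\mapsto\{2,3,5\}$ for $N=5$. Monotonicity of $\sum_{j\in S}2^{N-j}$ under inclusion holds for either ordering, so the conclusion does not depend on that choice.
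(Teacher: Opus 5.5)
Your proof is correct and follows the paper's argument. The paper first uses Lemma~\ref{le:subset} to restrict to indices with $J\subseteq S(x)$, then uses that $R_z$ is diagonal with first entry $1$ to narrow this to $J'\subseteq S(x)$ (you do both in one step by reading off the diagonal entries), and it concludes with the same monotonicity fact you use, namely that $J'=S(k)\subseteq S(h)$ forces $k\le h$; your formula $x=\sum_{j\in S(x)}2^{N-j}$ just makes the paper's ``for otherwise $k\le h$'' explicit.
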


\begin{proof}
    Let $\ket{\mathbf Q_1} = \bigwedge^{J}_m R_z(\varphi_{J,m}) \ket{\mathbf{Q}}$.  By Lemma \ref{le:subset}, the controlled operator $\bigwedge^{J}_m R_z(\varphi_{J,m})$ will act only on the components $x$ of $\ket{\mathbf Q}$ such that $J \subseteq S(x)$. However for those components whose index $x$ is such that $J \subseteq S(x)$ and $J' \nsubseteq S(x)$ the operator do not modify their values because $R_z(\varphi_{J,m})$ is diagonal and the first element of the diagonal is one. Therefore $\bigwedge^{J}_m R_z(\varphi_{J,m})$ will act only on those components with index $x$ such that $J' \subseteq S(x)$.  Now we have that for all $0\leq h<k$, $S(k) \nsubseteq S(h)$, for otherwise $k \leq h$. Thus $J' =S(k) \nsubseteq S(h)$ and the lemma follows by Lemma \ref{le:subset}.
\end{proof}

\bigskip

\begin{lemma}
    Let $\ket{\mathbf{Q}} =\ket{Q_1 \dots Q_N}$ be a state of $N$ qubits. Let $M = 2^N-1$ and 
$$\ket{\mathbf{Q}} =\sum_{x=0}^M A_xe^{i \phi_x} \ket{x}$$.
There exists a transformation $\mathbf U_z$
\begin{equation*}
    \mathbf U_z =\prod^N_{m=1}\prod^0_{k=m-1}U^z_{k,m}
\end{equation*}
such that if $\ket{\mathbf Q_1}=\mathbf U_z \ket{\mathbf{Q}} = \sum_{x=0}^M R_xe^{i\psi_x}\ket{x}$ then  $\psi_x = 0$ for all $0 \leq x \leq M$. 
\end{lemma}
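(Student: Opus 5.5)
We build the phase angles of $\mathbf U_z$ one basis index at a time, in increasing order of the index, and use Lemma~\ref{le:less_thanZ} to see that each new gate leaves all earlier indices untouched.

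First, the global phase is irrelevant, so we may take $\phi_0=0$ (or absorb it). Every factor of $\mathbf U_z$ is diagonal with first diagonal entry $1$, so the $\ket{0\dots0}$ component is never changed.

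Next, there is a bijection between indices $k\in\{1,\dots,M\}$ and the gates of $\mathbf U_z$. For $k\ge1$, let $m=\max S(k)$ and $J=S(k)\setminus\{m\}$. Then $\bigwedge^{J}_m R_z(\varphi_{J,m})$ is exactly one factor of $\mathbf U_z$: it lies in $U^z_{|J|,m}$, with $J\subseteq\{1,\dots,m-1\}$. Conversely, every factor arises this way. This accounts for $2^N-1$ gates for $2^N-1$ relative phases.

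By the argument in the proof of Lemma~\ref{le:less_thanZ}, the gate $G_k$ associated with $k$ multiplies component $x$ by $e^{i\varphi_{J,m}}$ precisely when $S(k)\subseteq S(x)$, and leaves it alone otherwise. Since all the $R_z$ gates commute (eq.~\eqref{eq:commuteZ}), the ordering inside $\mathbf U_z$ does not matter. The net phase added to component $x$ is therefore
\[
\sum_{\emptyset\neq S(k)\subseteq S(x)}\varphi_{S(k)\setminus\{\max S(k)\},\,\max S(k)}.
\]

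We require
\[
\phi_x+\sum_{\emptyset\neq S(k)\subseteq S(x)}\varphi_{(k)}\equiv 0 \pmod{2\pi}
\]
for every $x\ge1$, writing $\varphi_{(k)}$ for the angle of $G_k$. This is a linear system whose matrix is the zeta matrix of the subset lattice. It is triangular with respect to the ordering by integer value, because $S(k)\subseteq S(x)$ implies $k\le x$, and its diagonal entries are $1$.

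We solve it by induction on $x$. Set
\[
\varphi_{(x)}=-\phi_x-\sum_{\emptyset\neq S(k)\subsetneq S(x)}\varphi_{(k)},
\]
where all terms on the right are already determined. Equivalently, Möbius inversion gives the closed form
\[
\varphi_{(x)}=-\sum_{\emptyset\neq S(y)\subseteq S(x)}(-1)^{|S(x)|-|S(y)|}\phi_y .
\]
Reduce everything modulo $2\pi$. Lemma~\ref{le:less_thanZ} is exactly the statement that choosing $\varphi_{(x)}$ does not disturb components $y<x$, so the earlier equations stay satisfied. It follows that $\psi_x=0$ for all $x$, with $R_x=A_x$ throughout.

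The main obstacle is one small subtlety: components with $A_x=0$ have an undefined $\phi_x$. For those we set $\phi_x=0$ canonically, and the construction goes through unchanged.
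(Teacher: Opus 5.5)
Your proof is correct and is essentially the paper's argument. The paper's Algorithm~\ref{alg:state_transformationZ} fixes the gate angles in increasing order of the index $x$, sets each one to minus the current phase of component $x$, and uses Lemma~\ref{le:less_thanZ} to guarantee that earlier components are never touched again; this is exactly forward substitution in your triangular (zeta-matrix) system. Your M\"obius-inversion closed form, your use of the commutativity of the $R_z$ gates to reconcile the algorithm's order with the product order in $\mathbf U_z$, and your observation that $\psi_0=\phi_0$, so the global phase must first be removed, are useful clarifications the paper leaves implicit.
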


\begin{proof}
    
Consider the procedure of Algorithm  \ref{alg:state_transformationZ}. 
\begin{figure}[htbp]
\caption{State Vector Transformation Procedure $Z$}
\label{alg:state_transformationZ}
\begin{algorithmic}[1]
\Require The state vector $\ket{\mathbf{Q}} =\sum_{x=0}^M A_xe^{i \phi_x} \ket{x}$
\Ensure The set of angles of rotation $\varphi_{J,m}$, $J \in \mathcal{J}_{k,m}$, $1 \leq m\leq N$, $0 \leq k <m $$\ket{\mathbf{Q_1}}$
\For{$x = 1 \to 2^{N}-1$}
    \State Let $J' \subseteq \{1,\dots, N\}$ such that $S(x)= J'$
    \State  Let $m= \max\{ j | j \in J'\}$
    \State Let $J = J' \setminus \{m\}$
    \State Set $\varphi_{J,m} \leftarrow -\phi_{x}$
    \State Set $\ket{\mathbf{Q}} \leftarrow \bigwedge^J_m R_z(\varphi_{J,m}) \cdot \ket{\mathbf{Q}}$  
    \label{step:transformZ}
    
\EndFor
\State $\ket{\mathbf{Q_1}} \leftarrow \ket{\mathbf{Q}}$
\State \Return $\varphi_{J,m}$, $J \in \mathcal{J}_{k,m}$, $1 \leq m\leq N$, $0 \leq k <m $
\end{algorithmic}
\end{figure}
Note that at step \ref{step:transformZ} of Algorithm \ref{alg:state_transformationZ}, the phase of component $x$ of $\ket{\mathbf{Q}}$ will be zero. Furthermore, by Lemma \ref{le:less_thanZ}, the controlled operator $\bigwedge^J_m R_z(\varphi_{J,m})$ will act only on components with index $h$ such that $h \geq x$. Therefore, at step \ref{step:transformZ} of Algorithm  \ref{alg:state_transformationZ} the component with index $x$ will have its phase equal to zero and it will be not modified during the rest of the execution of the algorithm.  Thus, at the end of the algorithm, all the  components of $\ket{\mathbf Q_1}$ have their phase equal to zero. The set of angles given in output therefore satisfy the requirement for the transformation $\mathbf U_z$.
\end{proof}

\bigskip

\begin{example}
    As an example, consider a system with three qubits. The tables \ref{tab:part_1} and \ref{tab:part_2} show, for each iteration $x$, the parameters of the algorithm and the phase of each component of $\ket{\mathbf Q}$. At the end of the algorithm the phase of each component of system $\ket{\mathbf Q_1}$ is zero.
\end{example}
\begin{table}[htbp]
\centering
\caption{The parameters and the phases of system $\ket{\mathbf Q}$ at each iteration $x = 1$ to $5$ of Algorithm \ref{alg:state_transformationZ}}
\label{tab:part_1}
\small
\begin{tabular}{l *{5}{c}}
\toprule
$x$ & 1 & 2 & 3 & 4 & 5 \\
$m$ & 3 & 2 & 3 & 1 & 3 \\
$J$ & $\emptyset$ & $\emptyset$ & $\{2\}$ & $\emptyset$ & $\{1\}$ \\
$\varphi_{J,m}$ & $-\phi_1$ & $-\phi_2$ & $-\phi_3+\phi_1+\phi_2$ & $-\phi_4$ & $-\phi_5+\phi_4+\phi_1$ \\
\midrule
$\ket{000}$ & 0 & 0 & 0 & 0 & 0 \\
$\ket{001}$ & 0 & 0 & 0 & 0 & 0 \\
$\ket{010}$ & $\phi_2$ & 0 & 0 & 0 & 0 \\
$\ket{011}$ & $\phi_3-\phi_1$ & $\phi_3-\phi_1-\phi_2$ & $0$ & 0 & 0 \\
$\ket{100}$ & $\phi_4$ & $\phi_4$ & $\phi_4$ & 0 & 0 \\
$\ket{101}$ & $\phi_5-\phi_1$ & $\phi_5-\phi_1$ & $\phi_5-\phi_1$ & $\phi_5-\phi_1-\phi_4$ & 0 \\
$\ket{110}$ & $\phi_6$ & $\phi_6-\phi_2$ & $\phi_6-\phi_2$ & $\phi_6-\phi_2-\phi_4$ & $\phi_6-\phi_2-\phi_4$ \\
$\ket{111}$ & $\phi_7-\phi_1$ & $\phi_7-\phi_1-\phi_2$ & $\phi_7-\phi_3$ & $\phi_7-\phi_3-\phi_4$ & $\phi_7-\phi_5-\phi_3+\phi_1$ \\
\bottomrule
\end{tabular}
\end{table}

\vspace{2em}

\begin{table}[htbp]
\centering
\caption{The parameters and the phases of system $\ket{\mathbf Q}$ at each iteration $x = 6$ to $7$ of Algorithm \ref{alg:state_transformationZ}}
\label{tab:part_2}
\small
\begin{tabular}{l *{2}{c}}
\toprule
$x$ & 6 & 7 \\
$m$ & 2 & 3 \\
$J$ & $\{1\}$ & $\{1,2\}$ \\
$\varphi_{J,m}$ & $-\phi_6+\phi_2+\phi_4$ & $-\phi_7+\phi_5+\phi_3-\phi_1+\phi_6-\phi_2-\phi_4$ \\
\midrule
$\ket{000}$ & 0 & 0 \\
$\ket{001}$ & 0 & 0 \\
$\ket{010}$ & 0 & 0 \\
$\ket{011}$ & 0 & 0 \\
$\ket{100}$ & 0 & 0 \\
$\ket{101}$ & 0 & 0 \\
$\ket{110}$ & 0 & 0 \\
$\ket{111}$ & $\phi_7-\phi_5-\phi_3+\phi_1-\phi_6+\phi_2+\phi_4$ & 0 \\
\bottomrule
\end{tabular}
\end{table}

\begin{theorem}
    \label{th_nqubits_3_methods}  Given a general quantum system of $N$ qubits $\ket{\mathbf{Q}} =\ket{Q_1 \dots Q_N}$, there exists a unitary transformation $\mathbf{U}$
    \begin{equation}
    \mathbf U= \left(\prod^1_{j=N} \prod^0_{k=j-1}U^z_{k,j}\right)\left(\prod^1_{j=N}\prod^0_{k=j-1}U^y_{k,j}\right)
    \end{equation}    
    such that  we can represent $\ket{\mathbf{Q}}$ with the following expression:
\begin{equation}
    \ket{\mathbf{Q}}=\mathbf{U} \ket{0}^{\otimes N}
\end{equation}

\end{theorem}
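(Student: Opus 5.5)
We prove the statement by running the construction backwards: starting from $\ket{\mathbf Q}$, we strip off first the phases and then the amplitudes until we reach $\ket{0}^{\otimes N}$. Throughout we work up to a global phase, which we fix so that $\phi_0=0$.

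\textbf{Phases.} We first apply the preceding lemma on $\mathbf U_z$ (Algorithm \ref{alg:state_transformationZ}). It gives angles $\varphi_{J,m}$ such that the product of the controlled $R_z$ gates, applied to $\ket{\mathbf Q}$, yields a vector $\sum_x R_x\ket{x}$ with every phase equal to $0$ and $R_x=A_x\ge 0$. All $R_z$ gates are diagonal, so by \eqref{eq:commuteZ} they commute. Hence the inverse of this product is again a product of controlled $R_z$ gates (with angles $-\varphi_{J,m}$), and it can be written in the ordered form $\prod_{j=N}^{1}\prod_{k=j-1}^{0}U^z_{k,j}$. Consequently it suffices to prove the claim for a real nonnegative vector $\sum_x A_x\ket{x}$ using only the $Y$-part.

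\textbf{Amplitudes.} Here we argue by induction on $N$, following the decoding in Figure \ref{fig:state_decodification}. The base case $N=1$ is the ordinary Bloch sphere: take $\theta_1=2\arctan(A_1/A_0)$. Theorem \ref{th1} and Theorem \ref{th_3qubits} cover $N=2,3$ explicitly. For the inductive step we write $x\in\{0,1\}^{N-1}$ for the first $N-1$ bits. Every gate $\bigwedge^J_N R_y$ targets qubit $N$, so it acts on each two-dimensional block $\{\ket{x0},\ket{x1}\}$ by the rotation $R_y(\sum_{J\subseteq S(x)}\alpha_{J,N})$. Since rotations about the same axis add, these gates commute among themselves, and the whole block $\prod_k U^y_{k,N}$ therefore acts on block $x$ as $R_y(\Theta_x)$, with $\Theta_x$ given by \eqref{eq:Eff_angle}.

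We choose $\Theta_x$ so that
\[
(A_{x0},A_{x1})=\sqrt{A_{x0}^2+A_{x1}^2}\,(\cos\Theta_x,\sin\Theta_x).
\]
This means $\Theta_x=\arctan(A_{x1}/A_{x0})$, with the canonical values $\pi/2$ when $A_{x0}=0$ and $0$ when both entries vanish. With this choice, $R_y(-\Theta_x)$ maps the block to $(B_x,0)$, where $B_x=\sqrt{A_{x0}^2+A_{x1}^2}$.

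The map $\{\alpha_{J,N}\}_{J\subseteq\{1,\dots,N-1\}}\mapsto\{\Theta_x\}$ is a zeta transform on the Boolean lattice. It is inverted by M\"obius inversion,
\[
\alpha_{J,N}=\sum_{T\subseteq J}(-1)^{|J\setminus T|}\Theta_{T},
\]
which is exactly the triangular back-substitution already displayed in the intuitive proof. So every required $\alpha_{J,N}$ exists and is unique. The reduced vector $\sum_x B_x\ket{x}\otimes\ket{0}$ is a normalized real nonnegative $(N-1)$-qubit state. By the induction hypothesis it equals $\prod_{j=N-1}^{1}\prod_k U^y_{k,j}\ket{0}^{\otimes N-1}$, and that product acts only on qubits $1,\dots,N-1$. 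Applying $\prod_k U^y_{k,N}$ on the left then recovers the original real vector. This produces exactly the ordering $\prod_{j=N}^{1}\prod_{k=j-1}^{0}U^y_{k,j}$ in $\mathbf U$.

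\textbf{Main obstacle.} The delicate point is that the $\theta$ angles obtained by M\"obius inversion may fall outside $[0,\pi]$. If one insists on nonnegative moduli and on the prescribed angular ranges, the sign bookkeeping must be shifted into the phases. We avoid this by performing the $Y$-decoding first on a real vector and only afterwards choosing the $Z$-angles to match the phases of $\mathbf U_y\ket{0}^{\otimes N}$. Any sign $-1$ that the $Y$-stage introduces in a component is then absorbed as an extra $\pi$ in the corresponding target phase, and it is these adjusted phases that we feed into the lemma on $\mathbf U_z$. Because that lemma works for arbitrary phases, existence follows. Finally we check $\mathbf U=\mathbf U_z\mathbf U_y$, with $\mathbf U_z$ the inverse of the phase-removing transformation from the lemma, by noting that diagonal gates applied after real rotations only multiply components by phases. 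This gives $\ket{\mathbf Q}=\mathbf U\ket{0}^{\otimes N}$.
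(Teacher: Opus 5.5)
Your proof is correct. Its outer structure is the same as the appendix proof: remove the phases with the $\mathbf U_z$ lemma, then induct on $N$ by annihilating the components with qubit $N$ in $\ket{1}$ and passing the reduced state to the $(N-1)$-qubit case.

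The difference lies inside the inductive step. The paper's Algorithm~\ref{alg:state_transformationY} is greedy. It visits the control sets by increasing cardinality and reads each $\theta_{J,N}=\arctan(A_{2x+1}/A_{2x})$ off the \emph{current}, already partially rotated amplitudes. It then invokes Lemma~\ref{le:subset} to see that a zeroed entry is never touched again, because later control sets are not subsets of earlier ones. You instead use that all gates targeting qubit $N$ commute and act on block $x$ as a single rotation $R_y(\Theta_x)$. You fix $\Theta_x\in[0,\pi/2]$ from the \emph{original} amplitudes and invert the subset-sum relation by M\"obius inversion. This is the effective-angle back-substitution of the main-text proof of Theorem~\ref{th_nqubits_3}, made exact.

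The greedy version is purely procedural and needs neither commutation nor a closed form. Yours gives explicit angles and also gains some rigor. Because each $\Theta_x$ is fixed exactly rather than modulo $\pi$, the reduced amplitudes $B_x=\sqrt{A_{x0}^2+A_{x1}^2}$ are nonnegative, so the inductive hypothesis (phase-free states) genuinely applies. In the greedy algorithm this can fail. Already for $N=3$, write $\beta_x=\arctan(A_{x1}/A_{x0})$ for the blocks $x\in\{00,01,10,11\}$; block $11$ then generically ends at $-B_{11}$ whenever $|\beta_{11}-\beta_{01}-\beta_{10}+\beta_{00}|>\pi/2$. The paper's stated hypothesis does not literally cover that case, though it is easily repaired by running the induction over all real vectors.

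Your final paragraph on forcing $\theta\in[0,\pi]$ goes beyond what this statement asks, but it is sound. It relies on the fact that the $\ket{0\cdots 0}$ amplitude of $\mathbf U_y\ket{0}^{\otimes N}$ is $\prod_j\cos(\theta_j/2)\ge 0$. Hence every sign you need to absorb sits on a component whose phase the controlled $R_z$ gates can actually adjust.
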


\begin{proof}
    
Since matrices $U^a_{k,j}$ are unitary it is sufficient to prove that  given an arbitrary state $\ket{\mathbf{Q}}$ there exist unitary $\mathbf{T}$ 
\begin{equation} \label{eq_unitary_inverse}
    \mathbf{T} =  \left(\prod^N_{j=1}\prod^{j-1}_{k=0}U^y_{k,j}\right) \left(\prod^N_{j=1}\prod^{j-1}_{k=0}U^z_{k,j}\right)
\end{equation}
such that
\begin{equation} \label{eq:inverse}
    \ket{0}^{\otimes N}=   \mathbf{T} \ket{\mathbf{Q}}.
\end{equation}
Once we do this, clearly $\mathbf{U}=\mathbf{T}^{\dag}$. Let $M = 2^N-1$ and 
$$\ket{\mathbf{Q}} =\sum_{x=0}^M A_xe^{i \phi_x} \ket{x}.$$\\
By Lemma \ref{le:less_thanZ}, there exists a transformation $\mathbf U_z$ such that in the system $\ket{\mathbf Q_1}=\mathbf U_z \ket{\mathbf{Q}} = \sum_{x=0}^M R_xe^{i\psi_x}\ket{x}$ we have that  $\psi_x = 0$ for all $0 \leq x \leq M$. Therefore it remains to prove that, given a state $\ket{\mathbf{Q_1}}$ where all the phases are zero, there exists a transformation 
$$\mathbf U_y = \prod^N_{j=1}\prod^{0}_{k=j-1}U^y_{k,j}$$ 
such that 
\begin{equation} \label{eq:inverse_to_zero}
\ket{0}^{\otimes N}=\mathbf U_y\ket{\mathbf{Q_1}}.
\end{equation}
Thus proving the existence of $\mathbf{T}=\mathbf U_y\mathbf U_z$  which satisfies \eqref{eq:inverse}. We prove this statement by induction on the number $N$ of qubits.

\vspace{1em}
\textit{Base case}: $N=1$. Clearly, if $\ket{ \mathbf Q_1} =\cos \frac{\theta_1}{2} \ket{0}+ \sin \frac{\theta_1}{2}\ket{1}$ then $\mathbf U_y = \left (R_y(\frac{\theta_1}{2} )\right)^{\dag} $ is the unitary transformation that satisfy \eqref{eq:inverse_to_zero}.

\vspace{1em}
\textit{Induction step}: $N>1$. Suppose the statement is true for all states of $N-1$ qubits having their phases equal to zero.  We prove that that there exists unitary transformations 
$\mathbf W_y =\prod^{N-1}_{k=0} U^y_{k,N} $  such that 
\begin{equation} \label{eq:inverseN}
    \ket{Q_1 \dots Q_{N-1}}\ket{ 0}=\mathbf W_y   \ket{Q_1 \dots Q_N}
\end{equation}
Once we do this, by induction hypothesis, there exists unitary transformation $\mathbf U_{y_{N-1}}$ such that $\ket{0}^{\otimes N-1} =\mathbf U_{y_{N-1}} \ket{Q_1 \dots Q_{N-1}}$, so that 

\begin{equation}
 \mathbf U_y=(\mathbf U_{y_{N-1}}\otimes I) \mathbf W_y
\end{equation} 
satisfies equation \eqref{eq:inverse_to_zero}.
The goal of transformation $\mathbf W_y$  is to make the amplitude of every odd component of $\ket{\mathbf Q_1}$ equal to zero.\\

Consider the procedure of Algorithm  \ref{alg:state_transformationY}.
\begin{figure}[htbp]
\caption{State Vector Transformation Procedure $Y$}
\label{alg:state_transformationY}
\begin{algorithmic}[1]
\Require The state vector $\ket{\mathbf Q_1} =\sum_{x=0}^M A_xe^{i \phi_x}\ket{x}$ where $\phi_x=0$ for all $0 \leq x \leq M$ 
\Ensure The set of angles of rotation $\theta_{J,N}$, $J \subseteq \{1,2, \dots, N-1\}$

\If { $A_{0} \neq 0$}
            \State Set $\theta_{\emptyset,N} \leftarrow 2\arctan \left (\frac{A_{1}}{A_{0}}\right )$  
        \Else 
            \State Set $\theta_{\emptyset,N} \leftarrow \pi$
        \EndIf
   
    \State Set $\ket{\mathbf Q_1} \leftarrow \bigwedge^{\emptyset}_N \left[R_y(\theta_{\emptyset,N}/2)\right]^{\dag} \cdot \ket{\mathbf Q_1}$      \For{$k = 1 \to N-1$}
    \For{all $J \subseteq \{1,\dots, N-1\}$ such that $|J|=k$}
    
        \State Let $x$ an $(N-1)$-dimensional binary vector such that $S(x)=J$
       
        \If { $A_{2x} \neq 0$}
            \State Set $\theta_{J,N} \leftarrow \arctan \left (\frac{A_{2x+1}}{A_{2x}}\right )$  
        \Else 
            \State Set $\theta_{J,N} \leftarrow \frac{\pi}{2}$
        \EndIf
        \State Set $\ket{\mathbf Q_1} \leftarrow \bigwedge^J_N \left[R_y(\theta_{J,N})\right]^{\dag} \cdot \ket{\mathbf Q_1}$  
        \label{step:transformY}
    \EndFor    
\EndFor
\State \Return $\theta_{J,N}$, $J \subseteq \{1,2, \dots, N-1\}$

\end{algorithmic}
\end{figure}
Note that at step \ref{step:transformY} of Algorithm \ref{alg:state_transformationY}, the amplitude of  of component with index $2x+1$ will be zero. In fact $\left[R_y(\theta_{J,N})\right]^{\dag}$ will act on component with index $2x+1$ by setting its amplitude as $$ -A_{2x}\sin(\theta_{J,N})+A_{2x+1}\cos(\theta_{J,N}).$$ Thus  if $A_{2x} \neq 0 $ and  $\theta_{J,N}=\arctan \frac{A_{2x+1}}{A_{2x}}$ then 
\begin{align*}
&-A_{2x}\sin(\theta_{J,N})+A_{2x+1}\cos(\theta_{J,N}) =     \\
& = -A_{2x} \tan(\theta_{J,N})\cos(\theta_{J,N}) +A_{2x+1}\cos(\theta_{J,N})\\
& = (-A_{2x} \tan(\theta_{J,N}) + A_{2x+1}) \cos(\theta_{J,N}) = 0.
\end{align*}
Otherwise, if $A_{2x} = 0$ setting $\theta_{J,N}=\pi/2$ also makes the amplitude of component $2x+1$ equal to zero. 
Let $J'$ be a set selected by the algorithm after $J$. We have that $|J'| \geq |J|$ and, therefore $J' \nsubseteq J$.  By Lemma \ref{le:subset}, the component  with index $2x+1$ will remain the same during the rest of the execution of the algorithm. 
Therefore at the end of the algorithm all the odd components of $\ket{\mathbf Q_1}$ have their amplitude equal to zero. The set of angles given in output therefore satisfy the requirement for the design of $\mathbf W_y$.
\end{proof}

\bigskip

\noindent
In the following, for the sake of simplicity we will use the rotation  matrix $R_y(\alpha)$:
\begin{equation*}
    R_y(\alpha) = \begin{pmatrix} 
        \cos(\alpha) & \sin(\alpha) \\ -\sin(\alpha) & \cos(\alpha) \end{pmatrix}.
\end{equation*}
As an example consider a system of $2$ qubits $$\ket{\mathbf Q_1}= A_0\ket{00}+A_1\ket{01}+A_2\ket{10}+A_3\ket{11},$$ where $A_j$ are positive real numbers.
At step 1 of the Algorithm \ref{alg:state_transformationY} we set $\theta_{\emptyset,2}=2\arctan\left(A_1/A_0\right)$. Then we apply the rotation $\bigwedge^{\emptyset}_2 R_y\left(\theta_{\emptyset,2}/2\right)$ to $\ket{\mathbf Q_1}$ whose matrix is

\begin{equation*}
\bigwedge^{\emptyset}_2 R_y\left(\frac{\theta_{\emptyset,2}}{2}\right)=
\begin{pmatrix}
\cos\left(\frac{\theta_{\emptyset,2}}{2}\right) & \sin\left(\frac{\theta_{\emptyset,2}}{2}\right) & 0 & 0 \\
-\sin\left(\frac{\theta_{\emptyset,2}}{2}\right) & \cos\left(\frac{\theta_{\emptyset,2}}{2}\right) & 0 & 0 \\
0 & 0 & \cos\left(\frac{\theta_{\emptyset,2}}{2}\right) & \sin\left(\frac{\theta_{\emptyset,2}}{2}\right) \\
0 & 0 & -\sin\left(\frac{\theta_{\emptyset,2}}{2}\right) & \cos\left(\frac{\theta_{\emptyset,2}}{2}\right)
\end{pmatrix}.
\end{equation*}
After this rotation, the components $A_j^{(1)}$ of the state vector $\ket{\mathbf Q_1}$ become
\begin{align*}
    A_0^{(1)} &= A_0\cos\left(\frac{\theta_{\emptyset,2}}{2}\right) + A_1\sin\left(\frac{\theta_{\emptyset,2}}{2}\right)\\
    A_1^{(1)} &=0\\
    A_2^{(1)} &=A_2\cos\left(\frac{\theta_{\emptyset,2}}{2}\right) + A_3\sin\left(\frac{\theta_{\emptyset,2}}{2}\right)\\
    A_3^{(1)} &=A_3\cos\left(\frac{\theta_{\emptyset,2}}{2}\right) -A_2\sin\left(\frac{\theta_{\emptyset,2}}{2}\right).
\end{align*}
Now at step 7 of Algorithm \ref{alg:state_transformationY} we set $\theta_{1,2}=\arctan \left(A_3^{(1)}/A_2^{(1)}\right)$ and we apply the rotation $\bigwedge^{1}_2 \left[R_y(\theta_{1,2})\right]^{\dag}$ to $\ket{\mathbf Q_1}$ whose matrix is
\begin{equation*}
\bigwedge^{1}_2 R_y(\theta_{1,2})=
\begin{pmatrix}
1 & 0 & 0 & 0 \\
0 & 1 & 0 & 0 \\
0 & 0 & \cos(\theta_{1,2}) & \sin(\theta_{1,2}) \\
0 & 0 & -\sin(\theta_{1,2}) & \cos(\theta_{1,2})
\end{pmatrix}.
\end{equation*}
After this rotation, the components $A_j^{(2)}$ of the state vector $\ket{\mathbf Q_1}$ become
\begin{align*}
    A_0^{(2)} &= A_0\cos\left(\frac{\theta_{\emptyset,2}}{2}\right) + A_1\sin\left(\frac{\theta_{\emptyset,2}}{2}\right)\\
    A_1^{(2)} &=0\\
    A_2^{(2)} &=A_2^{(1)}\cos\left(\theta_{1,2}\right) + A_3^{(1)}\sin\left(\theta_{1,2}\right)\\
    A_3^{(2)} &=0.
\end{align*}
Note that $A_0^{2}=A_0^{(1)}$ and $A_1^{2}=A_1^{(1)}$.\\
\noindent
As another example, consider a 3 qubits system. 
Then the rotation matrix $\bigwedge^{\emptyset}_3 R_y\left(\theta_{\emptyset,3}/2\right)$ is a $8 \times 8$ matrix (where the 0's are $2 \times 2$ zero matrices):
\begin{equation*}
\bigwedge^{\emptyset}_3 R_y\left(\frac{\theta_{\emptyset,3}}{2}\right) = \begin{pmatrix}
R_y\left(\frac{\theta_{\emptyset,3}}{2}\right) & 0 & 0 & 0 \\
0 & R_y\left(\frac{\theta_{\emptyset,3}}{2}\right) & 0 & 0 \\
0 & 0 & R_y\left(\frac{\theta_{\emptyset,3}}{2}\right) & 0 \\
0 & 0 & 0 & R_y\left(\frac{\theta_{\emptyset,3}}{2}\right)
\end{pmatrix}.
\end{equation*}
After the application of this rotation, the component with index $x=1$ of the system will be zero.
If $J=2$ then the rotation matrix $\bigwedge^{2}_3 R_y\left(\theta_{2,3}\right)$ is:
\begin{equation*}
\bigwedge^{2}_3 R_y\left(\theta_{2,3}\right)
 = \begin{pmatrix}
I & 0 & 0 & 0 \\
0 & R_y\left(\theta_{2,3}\right) & 0 & 0 \\
0 & 0 & I & 0 \\
0 & 0 & 0 & R_y\left(\theta_{2,3}\right)
\end{pmatrix}.
\end{equation*}
After the application of this rotation, the component with index $x=3$ of the system will be zero. If $J=1$ then the rotation matrix $\bigwedge^{1}_3 R_y\left(\theta_{1,3}\right)$ is:
\begin{equation*}
\bigwedge^{1}_3 R_y\left(\theta_{1,3}\right)
 = \begin{pmatrix}
I & 0 & 0 & 0 \\
0 & I  & 0 & 0 \\
0 & 0 & R_y\left(\theta_{1,3}\right) & 0 \\
0 & 0 & 0 & R_y\left(\theta_{1,3}\right)
\end{pmatrix}.
\end{equation*}
After the application of this rotation the component with index $x=5$ of the system will be zero. If $J = \{1,2\}$
then the rotation matrix $\bigwedge^{\{1,2\}}_3 R_y\left(\theta_{\{1,2\},3}\right)$ is:
\begin{equation*}
\bigwedge^{\{1,2\}}_3 R_y\left(\theta_{\{1,2\},3}\right)
 = \begin{pmatrix}
I & 0 & 0 & 0 \\
0 & I  & 0 & 0 \\
0 & 0 & I & 0 \\
0 & 0 & 0 & R_y\left(\theta_{\{1,2\},3}\right)
\end{pmatrix}.
\end{equation*}
After the application of this rotation the component with index $x=7$ of the system will be zero.

\bibliographystyle{quantum}
\bibliography{bibliografia}

\end{document}